\renewcommand{\arraystretch}{1}
\newtheorem{theorem}{Theorem}[section]
\newtheorem{corollary}[theorem]{Corollary}
\newtheorem{lemma}[theorem]{Lemma}
\newtheorem{assumption}[theorem]{Assumption}
\newtheorem{definition}[theorem]{Definition}
\newtheorem{example}[theorem]{Example}
\newtheorem{remark}[theorem]{Remark}
\newenvironment{rem}{\begin{remark}\rm}{\end{remark}}
\newtheorem{tab}{Table}
\newcommand{\argmax}{\operatornamewithlimits{argmax}}
\def\EE{{\bf E}}
\newcommand{\eps}{\varepsilon}
\def\p{{\mathbf p}}
\title{Asymptotics of Input-Constrained Erasure Channel Capacity~\footnote{A preliminary version of this work has been presented in IEEE ISIT 2014.}}
\author{\begin{tabular}{cc}
Yonglong Li&Guangyue Han\\
The University of Hong Kong&The University of Hong Kong\\
{\em email:} yonglong@hku.hk&{\em email:} ghan@hku.hk\\
\end{tabular}}
\date{{\normalsize \today}}
\begin{document}\maketitle\thispagestyle{empty}

\begin{abstract}
In this paper, we examine an input-constrained erasure channel and we characterize the asymptotics of its capacity when the erasure rate is low. More specifically, for a general memoryless erasure channel with its input supported on an irreducible finite-type constraint, we derive partial asymptotics of its capacity, using some series expansion type formulas of its mutual information rate; and for a binary erasure channel with its first-order Markovian input supported on the $(1, \infty)$-RLL constraint, based on the concavity of its mutual information rate with respect to some parameterization of the input, we numerically evaluate its first-order Markov capacity and further derive its full asymptotics. The asymptotics obtained in this paper, when compared with the recently derived feedback capacity for a binary erasure channel with the same input constraint, enable us to draw the conclusion that feedback may increase the capacity of an input-constrained channel, even if the channel is memoryless.
\end{abstract}

{\em Index Terms:} erasure channel, input constraint, capacity, feedback.

\section{Introduction} \label{introduction-section}

The primary concern of this paper is the erasure channel, which is a common digital communication channel model that plays a fundamental role in coding and information theory. Throughout the paper, we assume that time is discrete and indexed by the integers. At time $n$, the erasure channel of interest can be described by the following equation:
\begin{equation} \label{mec}
Y_n=X_n E_n,
\end{equation}
where the channel input $\{X_n\}$, supported on an irreducible finite-type constraint $\mathcal{S}$, is a stationary process taking values from the input alphabet $\mathcal{X}=\{1,2,\cdots, K\}$ , and the erasure process $\{E_n\}$, independent of $\{X_n\}$, is a binary stationary and ergodic process with {\em erasure rate} $\eps \triangleq P(E_1=0)$, and $\{Y_n\}$ is the channel output process over the output alphabet $\mathcal{Y}=\{0, 1, \cdots, K\}$. The word ``erasure'' as in the name of our channel naturally arises if a ``$0$'' is interpreted as an erasure at the receiving end of the channel; so, at time $n$, the channel output $Y_n$ is nothing but the channel input $X_n$ if $E_n=1$, but an erasure if $E_n=0$.

Let $\mathcal{X}^*$ denote the set of all the finite length words over $\mathcal{X}$. Let $\mathcal{F}$ be a finite subset of $\mathcal{X}^*$, and let $\mathcal{S}$ be the {\em finite-type constraint} with respect to $\mathcal{F}$, which is a subset of $\mathcal{X}^*$ consisting of all the finite length words over $\mathcal{X}$, each of which does not contain any element in $\mathcal{F}$ as a contiguous subsequence (or, roughly, elements in $\mathcal{F}$ are ``forbidden'' in $\mathcal{S}$). The most well known example is the $(d,k)$-run-length-limited (RLL) constraint over the alphabet $\{1, 2\}$, which forbids any sequence with fewer than $d$ or more than $k$ consecutive $1$'s in between two successive $2$'s; in particular, a prominent example is the $(1,\infty)$-RLL constraint, a widely used constraint in magnetic recording and data storage; see~\cite{symbolicmarcuslind, mrs98}. For the $(d, k)$-RLL constraint with $k < \infty$, a forbidden set $\mathcal{F}$ is
$$
\mathcal{F}=\{2\underbrace{1\cdots1}_l2: 0 \leq l < d \} \cup \{\underbrace{0\cdots0}_{k+1}\}.
$$
When $k=\infty$, one can choose $\mathcal{F}$ to be
$$
\mathcal{F}=\{2\underbrace{1\cdots1}_l2: 0 \leq l < d \};
$$
in particular when $d=1, k=\infty$, $\mathcal{F}$ can be chosen to be $\{22\}$. The {\em length} of $\mathcal{F}$ is defined to be that of the longest words in $\mathcal{F}$. Generally speaking, there may be many such $\mathcal{F}$'s with different lengths that give rise to the same constraint $\mathcal{S}$; the length of the shortest such $\mathcal{F}$'s minus $1$ gives the {\em topological order} of $\mathcal{S}$. For example, the topological order of the $(1, \infty)$-RLL constraint, whose shortest $\mathcal{F}$ proves to be $\{22\}$, is $1$. A finite-type constraint $\mathcal{S}$ is said to be {\em irreducible} if for any $u, v \in \mathcal{S}$, there is a $w \in \mathcal{S}$ such that $uwv \in \mathcal{S}$.

As mentioned before, the input process $X$ of our channel (\ref{mec}) is assumed to be {\em supported} on an irreducible finite-type constraint $\mathcal{S}$, namely, $\mathcal{A}({X})\subseteq \mathcal{S}$, where
$$
\mathcal{A}({X}) \triangleq \{x_{i}^{j} \in \mathcal{X}^*:p_{X}(x_{i}^{j})>0 \}.
$$
The capacity of the channel (\ref{mec}), denoted by $C(\mathcal{S},\eps)$, can be computed by
\begin{eqnarray*}
C(\mathcal{S},\eps)=\sup_{\mathcal{A}({X})\subseteq \mathcal{S}} I(X;Y),
\end{eqnarray*}
where the supremum is taken over all stationary processes $X$ supported on $\mathcal{S}$. Here, we note that input-constraints~\cite{ZehaviWolf88} are widely used in various real-life applications such as magnetic and optical recording~\cite{mrs98} and communications over band-limited channels with inter-symbol interference~\cite{fo72}. Particularly, we will pay special attention in this paper to a binary erasure channel with erasure rate $\eps$ (BEC($\eps$)) with the input supported on the $(1, \infty)$-RLL constraint, denoted by $\mathcal{S}_0$ throughout the paper.

When there is no constraint imposed on the input process $X$, that is, $\mathcal{S}=\mathcal{X}^*$, it is well known that $C(\mathcal{S}, \eps)=(1-\eps) \log K$; see Theorem~\ref{fbnot}. When $\eps=0$, that is, when the channel is perfect with no erasures, $C(\mathcal{S}, \eps)$ proves to be the {\em noiseless capacity} of the constraint $\mathcal{S}$, which can be achieved by a unique $m$-th order Markov chain $\hat{X}$ with $\mathcal{A}(\hat{X})=\mathcal{S}$~\cite{parryintrinsicmarkovchains}. On the other hand, other than these two above-mentioned ``degenerated'' cases, ``explicit'' analytic formulas of capacity for ``non-degenerated'' cases have remained evasive, and the problem of analytically characterizing the noisy constrained capacity is widely believed to be intractable.

The problem of numerically computing the capacity $C(\mathcal{S}, \eps)$ seems to be as challenging: the computation of the capacity of a general channel with memory or input constraints is notoriously difficult and has been open for decades; and the fact that our erasure channel is only a special class of such ones does not appear to make the problem easier. Here, we note that for a discrete memoryless channel, Shannon gave a closed-form formula of the capacity in his celebrated paper~\cite{Shannon}, and Blahut~\cite{Blahut} and Arimoto~\cite{Arimoto}, independently proposed an algorithm which can efficiently compute the capacity and the capacity-achieving distribution simultaneously. However, unlike the discrete memoryless channels, the capacity of a channel with memory or input constraints in general admits no single-letter characterization and very little is known about the efficient computation of the channel capacity. To date, most known results in this regard have been in the forms of numerically computed bounds: for instance, numerically computed lower bounds by Arnold and Loeliger~\cite{arnoldinformationrate}, A.~Kavcic~\cite{kavcic2001}, Pfister, Soriaga and Siegel~\cite{pfister2001}, Vontobel and Arnold~\cite{vontobel2001}.

One of the most effective strategies to compute the capacity of channels with memory or input constraints is the so-called {\em Markov approximation} scheme. The idea is that instead of maximizing the mutual information rate over all stationary processes, one can maximize the mutual information rate over all $m$-th order Markov processes to obtain the $m$-th order Markov capacity. Under suitable assumptions (see, e.g.,~\cite{chensiegel}), when $m$ tends to infinity, the corresponding sequence of Markov capacities will tend to the channel capacity. For our erasure channel, the {\em $m$-th order Markov capacity} is defined as
\begin{eqnarray*}
C^{(m)}(\mathcal{S},\eps)=\sup I(X;Y),
\end{eqnarray*}
where the supremum is taken over all $m$-th order Markov chains supported on $\mathcal{S}$.

The main contributions of this work are the characterization of the asymptotics of the above-mentioned input-constrained erasure channel capacity. Of great relevance to this work are results by Han and Marcus~\cite{asymptotics-binary}, Jacquet and Szpankowski~\cite{noisyconstrainedcapacityforbsc}, which have characterized asymptotics of the capacity of the a binary symmetric channel with crossover probability $\eps$ (BSC($\eps$)) with the input supported on the $(1, \infty)$-RLL constraint. The approach in the above-mentioned work is to obtain the asymptotics of the mutual information rate first, and then apply some bounding argument to obtain that of the capacity. The approach in this work roughly follows the same strategy, however, as elaborated below, our approach differs from theirs to a great extent in terms of technical implementations.

Throughout the paper, we use the logarithm with base $e$ in the proofs and we use the logarithms with base $2$ in the numerical computations of the channel capacity. Below is a brief account of our results and methodology employed in this work.

The starting point of our approach is Lemma~\ref{nformula} in Section~\ref{mutual-information-rate-section}, a key lemma that expresses the conditional entropy $H(Y_0|Y_{-n}^{-1})$ in a form that is particularly effective for analyzing asymptotics of $C(\mathcal{S}, \eps)$ when $\eps$ is close to $0$. As elaborated in Theorem~\ref{wolf'sconjecture}, Lemma~\ref{nformula} naturally gives a lower and upper bound on $C(\mathcal{S}, \eps)$, where the lower bound gives a counterpart result of Wolf's conjecture for a BEC($\eps$). Moreover, when applied to the case when $X$ is a Markov chain, Lemma~\ref{nformula} yields some explicit series expansion type formulas in Theorem~\ref{entropyformula} and Corollary~\ref{memorylessec}, which aptly pave the way for characterizing the asymptotics of the input-constrained erasure channel capacity. Here we remark that the method in~\cite{asymptotics-binary, noisyconstrainedcapacityforbsc} have been further developed for more general families of memory channels in~\cite{asymptotics, hm09b} via examining the contractiveness of an associated random dynamical system~\cite{davidblackwell1957}. However, the methodology to derive asymptotics of the mutual information rate in this work capitalizes on certain characteristics that are in a sense unique to erasure channels.

In Section~\ref{general-asymptotics}, we consider a memoryless erasure channel with the input supported on an irreducible finite-type constraint, and in Theorem~\ref{asyerasurec}, we derive partial asymptotics of its capacity $C(\mathcal{S}, \eps)$ in the vicinity of $\eps=0$ where $C(\mathcal{S}, \eps)$ is written as the sum of a constant term, a linear term in $\eps$ and an $O(\eps^2)$-term. The lower bound part in the proof of this theorem follows from an easy application of Theorem~\ref{entropyformula}, and the upper bound part hings on an adapted argument in~\cite{asymptotics-binary}.

In Section~\ref{binary-asymptotics}, we consider a BEC($\eps$) with the input being a first-order Markov process supported on the $(1, \infty)$-RLL constraint $\mathcal{S}_0$. Within this special setup, we show in Theorem~\ref{concavitybec} that the $I(X; Y)$ is strictly concave with respect to some parameterization of $X$. And in Section~\ref{sub-2}, we numerically evaluate $C^{(1)}(\mathcal{S}_0, \eps)$ and the corresponding capacity-achieving distribution using the randomized algorithm proposed in~\cite{randomapproachhan} which proves to be convergent given the concavity of $I(X; Y)$. Moreover, the concavity of $I(X; Y)$ guarantees the uniqueness of the capacity achieving distribution, based on which we derive full asymptotics of the above input-constrained BEC($\eps$) around $\eps=0$ in Theorem~\ref{foc}, where $C^{(1)}(\mathcal{S}, \eps)$ is expressed as an infinite sum of all $O(\eps^k)$-terms.

In Section~\ref{feedback-section}, we turn to the scenarios when there might be feedback in our erasure channel. We first prove in Theorem~\ref{fbnot} that when there is no input constraint, the feedback does not increase the capacity of the erasure channel even with the presence of the channel memory. When the input constraint is not trivial, however, we show in Theorem~\ref{yonglong-feedback} that feedback does increase the capacity using the example of a BEC($\eps$) with the ($1, \infty$)-RLL input constraint, and so feedback may increase the capacity of input-constrained erasure channels even if there is no channel memory. The results obtained in this section suggest the intricacy of the interplay between feedback, memory and input constraints.

\section{A Key Lemma and Its Applications} \label{mutual-information-rate-section}

In this section, we focus on the mutual information of the erasure channel (\ref{mec}) introduced in Section~\ref{introduction-section}. The starting point of our approach is the following key lemma, which is particularly effective for analysis of input-constrained erasure channels.
\begin{lemma}\label{nformula}
For any $n \geq 1$, we have
\begin{equation}\label{nformula1}
H(Y_0|Y_{-n}^{-1})=H(E_0|E_{-n}^{-1})+\sum_{D \subseteq [-n,-1]}H(X_{0}|X_{D})P(E_0=1,E_{D}=1,E_{D^c}=0),
\end{equation}
where $[-n,-1] \triangleq \{-n,\cdots,-1\}$.
\end{lemma}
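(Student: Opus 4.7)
The plan is to exploit two elementary features of the channel~(\ref{mec}): that $E_n$ is a deterministic function of $Y_n$ (since $E_n=0$ iff $Y_n=0$), and that the processes $X$ and $E$ are independent. Together these allow $H(Y_0|Y_{-n}^{-1})$ to be cleanly split into an erasure-side term, depending only on $E$, and an input-side term, depending on $X$ only through the posterior given partial observations.

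The first move is to apply the chain rule, using that $E_0$ is a function of $Y_0$:
\begin{equation*}
H(Y_0|Y_{-n}^{-1}) = H(Y_0,E_0|Y_{-n}^{-1}) = H(E_0|Y_{-n}^{-1}) + H(Y_0|Y_{-n}^{-1},E_0).
\end{equation*}
Since $E_{-n}^{-1}$ is itself a function of $Y_{-n}^{-1}$, and since the additional information $Y_{-n}^{-1}$ carries beyond $E_{-n}^{-1}$ concerns only the $X$-process, the independence of $X$ and $E$ gives $H(E_0|Y_{-n}^{-1}) = H(E_0|E_{-n}^{-1})$, which is the first term of~(\ref{nformula1}).

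To handle the second term, I would condition on the value of $E_0$ and on the erasure pattern over $[-n,-1]$. When $E_0 = 0$, $Y_0 = 0$ deterministically and contributes nothing. When $E_0 = 1$, $Y_0 = X_0$, and for each $D \subseteq [-n,-1]$ the event $\{E_D = 1, E_{D^c} = 0\}$ is precisely the event that the nonzero coordinates of $Y_{-n}^{-1}$ lie exactly at $D$; on that event $Y_i = X_i$ for $i \in D$ and $Y_i = 0$ for $i \in D^c$. Using the independence of $X$ and $E$, the conditional distribution of $X_0$ given a realization $Y_{-n}^{-1} = y_{-n}^{-1}$ with nonzero pattern $D$ together with $E_0 = 1$ reduces to the conditional distribution of $X_0$ given $X_D = y_D$. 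Averaging the resulting conditional entropies against the joint law of $X_D$, then summing over $D$, produces
\begin{equation*}
H(Y_0 | Y_{-n}^{-1}, E_0) = \sum_{D \subseteq [-n,-1]} H(X_0 | X_D)\, P(E_0 = 1, E_D = 1, E_{D^c} = 0),
\end{equation*}
which matches the second term of~(\ref{nformula1}).

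The main obstacle, and the step requiring the most care, is the conditional-independence bookkeeping. In particular, I need to verify cleanly that (i) $E_0$ is conditionally independent of $Y_{-n}^{-1}$ given $E_{-n}^{-1}$, and (ii) on the event $\{E_0 = 1,\, E_D = 1,\, E_{D^c} = 0\}$, the conditional distribution of $X_0$ given $Y_{-n}^{-1}$ depends on $Y_{-n}^{-1}$ only through $Y_D = X_D$. Both claims rest on the joint factorization $P(X,E) = P(X)P(E)$, but require careful unpacking since $Y_{-n}^{-1}$ entangles information about $X$ and $E$, and since the erasure pattern $D$ is itself random. Once these independences are in hand, identity~(\ref{nformula1}) follows by routine chain-rule manipulation.
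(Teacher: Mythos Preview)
Your proposal is correct. The key factorization you rely on---that $p(y_{-n}^{-1})=p_X(y_D)\,P(E_D=1,E_{D^c}=0)$ for $D=\{i:y_i\neq 0\}$---is exactly what the paper uses, and both conditional-independence claims (i) and (ii) follow from it in the way you indicate.

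The organization differs from the paper's, though. The paper expands $H(Y_0|Y_{-n}^{-1})$ directly as a sum over $y_{-n}^0$, splits it according to whether $y_0=0$ (giving a term $T_1$) or $y_0\neq 0$ (giving $T_2$), computes $p(y_0|y_{-n}^{-1})$ explicitly in each case, and then further decomposes $T_2$ into a piece $T_3$ carrying the $X$-entropy and a residual. The $E$-entropy $H(E_0|E_{-n}^{-1})$ emerges only after recombining $T_1$ with that residual from $T_2$. Your route instead applies the entropy chain rule up front, inserting $E_0$ (which is a function of $Y_0$) to get $H(Y_0|Y_{-n}^{-1})=H(E_0|Y_{-n}^{-1})+H(Y_0|E_0,Y_{-n}^{-1})$, and then simplifies each summand separately via conditional independence. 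This is a cleaner decomposition: the erasure term and the input term are isolated from the outset rather than reassembled at the end, and you avoid the explicit pointwise computation of $p(y_0|y_{-n}^{-1})$. The paper's approach, on the other hand, makes the structure of $p(y_0|y_{-n}^{-1})$ completely explicit, which is reused later in the paper (e.g., in the feedback-capacity appendix). Both arguments are short and rest on the same independence; yours is the more conceptual packaging.
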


\begin{proof}
Note that
\begin{eqnarray*}
H(Y_0|Y_{-n}^{-1})&=&-\sum_{y_{-n}^{0}}p(y_{-n}^{0})\log p(y_{0}|y_{-n}^{-1})\\
&=&T_1(n)+T_2(n),
\end{eqnarray*}
where
$$
T_1(n)=-\sum_{y_{-n}^{-1},y_{0}=0}p(y_{-n}^{0})\log p(y_{0}|y_{-n}^{-1})
\qquad
\mbox{and}
\qquad
T_2(n)=-\sum_{y_{-n}^{-1},y_{0}\not=0}p(y_{-n}^{0})\log p(y_{0}|y_{-n}^{-1}).
$$
From the independence of $\{X_n\}$ and $\{E_n\}$, it follows that
\begin{align}\label{pformula}
p(y_{i}^{j})&=\sum_{x_{i}^{j}:\ x_{k}=y_{k}\  \mbox{\scriptsize for } k \in \mathcal{I}(y_{i}^{j})}p_{X}(x_{i}^{j})P(E_{\mathcal{I}(y_{i}^{j})}=1,E_{\bar{\mathcal{I}}(y_{i}^{j})}=0)\nonumber\\
&=p_{X}\left(y_{\mathcal{I}(y_{i}^{j})}\right)P(E_{\mathcal{I}(y_{i}^{j})}=1,E_{\bar{\mathcal{I}}(y_{i}^{j})}=0).
\end{align}
Here and throughout the paper, let $\mathcal{Y}^*$ be the set of all finite length words over $\mathcal{Y}$ and we define, for any $y_i^j \in \mathcal{Y}^*$,
$$
\mathcal{I}(y_{i}^{j})=\{k:i\le k\le j,y_{k}\not=0\}, \quad \bar{\mathcal{I}}(y_{i}^{j})=\{k:i\le k\le j,y_{k}=0\}
$$
and
$$
y_{\mathcal{I}(y_{i}^{j})}=\{y_{k}:k\in \mathcal{I}(y_{i}^{j})\}.
$$
For $y_{0}\not=0$,
\begin{eqnarray*}
p(y_0|y_{-n}^{-1})&=&\frac{p(y_{-n}^{0})}{p(y_{-n}^{-1})}\\
&=&\frac{p_{X}\left(y_{\mathcal{I}(y_{-n}^{0})}\right)P(E_{\mathcal{I}(y_{-n}^{0})}=1,E_{\bar{\mathcal{I}}(y_{-n}^{0})}=0)}{p_{X}\left(y_{\mathcal{I}(y_{-n}^{-1})}\right)P(E_{\mathcal{I}(y_{-n}^{-1})}=1,E_{\bar{\mathcal{I}}(y_{-n}^{-1})}=0)}\\
&\stackrel{(a)}{=}&p_{X}\left(y_{0}|y_{\mathcal{I}(y_{-n}^{-1})}\right)P(E_0=1|E_{\mathcal{I}(y_{-n}^{-1})}=1,E_{\bar{\mathcal{I}}(y_{-n}^{-1})}=0),
\end{eqnarray*}
where $(a)$ follows from the fact that $\bar{\mathcal{I}}(y_{-n}^{0})=\bar{\mathcal{I}}(y_{-n}^{-1})$. Similarly, for $y_0=0$,
\begin{eqnarray*}
p(y_0|y_{-n}^{-1})&=&\frac{p_{X}\left(y_{\mathcal{I}(y_{-n}^{0})}\right)P(E_{\mathcal{I}(y_{-n}^{0})}=1,E_{\bar{\mathcal{I}}(y_{-n}^{0})}=0)}{p_{X}\left(y_{\mathcal{I}(y_{-n}^{-1})}\right)P(E_{\mathcal{I}(y_{-n}^{-1})}=1,E_{\bar{\mathcal{I}}(y_{-n}^{-1})}=0)}\\
&\stackrel{(a)}{=}&P(E_0=1|E_{\mathcal{I}(y_{-n}^{-1})}=1,E_{\bar{\mathcal{I}}(y_{-n}^{-1})}=0),
\end{eqnarray*}
where $(a)$ follows from the fact that $\mathcal{I}(y_{-n}^{0})=\mathcal{I}(y_{-n}^{-1})$. Therefore,
\begin{align}
T_1(n)&=-\sum_{y_{-n}^{-1},y_{0}=0}p(y_{-n}^{0})\log p(y_{0}|y_{-n}^{-1})\notag\\
&=-\sum_{y_{-n}^{-1},y_{0}=0}p(y_{-n}^{0})\log P(E_0=0|E_{\mathcal{I}(y_{-n}^{-1})}=1,E_{\bar{\mathcal{I}}(y_{-n}^{-1})}=0)\notag\\
&=-\sum_{D \subseteq [-n,-1]}\sum_{y_{-n}^0:\mathcal{I}(y_{-n}^{-1})=D,y_{0}=0}p(y_{-n}^{0})\log P(E_0=0|E_{D})=1,E_{D^c}=0)\notag\\
&\stackrel{(a)}{=}-\sum_{D \subseteq [-n,-1]}P(E_0=0,E_{D}=1,E_{D^c}=0)\log P(E_0=0|E_{D}=1,E_{D^c}=0)\label{0erasure},
\end{align}
where $(a)$ follows from the fact that for any given $D \subseteq [-n, -1]$,
$$
\sum_{y_{-n}^0: \mathcal{I}(y_{-n}^{-1})=D,y_{0}=0}p(y_{-n}^{0})=P(E_0=0,E_{D}=1,E_{D^c}=0).
$$
Also, we have
\begin{align}
T_2(n)&=-\sum_{y_{-n}^{-1},y_{0}\not=0}p(y_{-n}^{0})\log p(y_{0}|y_{-n}^{-1})\notag\\
&=-\sum_{y_{-n}^{-1},y_{0}\not=0}p(y_{-n}^{0})\log p_{X}\left(y_{0}|y_{\mathcal{I}(y_{-n}^{-1})}\right)P(E_0=1|E_{\mathcal{I}(y_{-l}^{-1})}=1,E_{\bar{\mathcal{I}}(y_{-l}^{-1})}=0)\notag\\
&=T_3(n)-\sum_{y_{-n}^{-1},y_{0}\not=0}p(y_{-n}^{0})\log P(E_0=1|E_{\mathcal{I}(y_{-n}^{-1})}=1,E_{\bar{\mathcal{I}}(y_{-n}^{-1})}=0)\notag\\
&\stackrel{(a)}{=}T_3(n)-\sum_{D \subseteq [-n,-1]}P(E_0=1,E_{D}=1,E_{D^c}=0)\log P(E_0=1|E_{D}=1,E_{D^c}=0)\label{1erasure},
\end{align}
where $(a)$ follows from a similar argument as in the proof of~(\ref{0erasure}) and
$$
T_3(n)=-\sum_{y_{-n}^{-1},y_{0}\not=0}p(y_{-n}^{0})\log p_{X}(y_{0}|y_{\mathcal{I}(y_{-n}^{-1})}).
$$
From~(\ref{pformula}), it then follows that
\begin{align}
T_3(n)&=-\sum_{y_{-n}^{-1},y_{0}\not=0}p(y_{-n}^{0})\log p_{X}(y_{0}|y_{\mathcal{I}(y_{-n}^{-1})})\notag\\
&=-\sum_{D \subseteq [-n, -1]}\sum_{y_{-n}^{0}: \mathcal{I}(y_{-n}^{0})=D\cup\{0\}}p_{X}(y_{D},y_0)P(E_0=1,E_{D}=1,E_{D^c}=0)\log p_{X}(y_{0}|y_{D})\notag\\
&=\sum_{D \subseteq [-n, -1]}H(X_0|X_{D})P(E_0=1,E_{D}=1,E_{D^c}=0).\label{eerasure}
\end{align}
The desired formula for $H(Y_0|Y_{-n}^{-1})$ then follows from (\ref{0erasure}), (\ref{1erasure}) and (\ref{eerasure}).
\end{proof}

One of the immediate applications of Lemma~\ref{nformula} is the following lower and upper bounds on $C(\mathcal{S}, \eps)$.
\begin{theorem}\label{wolf'sconjecture}
$$
(1-\eps) C(\mathcal{S},0) \leq C(\mathcal{S},\eps) \leq (1-\eps) \log K.
$$
\end{theorem}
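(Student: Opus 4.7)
The plan is to derive both inequalities directly from Lemma~\ref{nformula} by sandwiching the conditional entropies $H(X_0|X_D)$ between two natural extremes. The crucial preliminary observation is that for our erasure channel one has $I(X;Y) = H(Y) - H(E)$: since the inputs $X_n$ take values in $\{1,\dots,K\}$, an output symbol $Y_n = 0$ is equivalent to $E_n = 0$, so the erasure process $E$ is a deterministic function of $Y$; combined with the independence of $X$ and $E$, this yields $H(Y_1^n \mid X_1^n) = H(E_1^n \mid X_1^n) = H(E_1^n)$, hence (dividing by $n$ and letting $n \to \infty$) $H(Y|X) = H(E)$. Thus it suffices to sandwich $H(Y)$ for an arbitrary input (for the upper bound) and for a carefully chosen one (for the lower bound).

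For the upper bound, I would bound each summand in Lemma~\ref{nformula} by the trivial estimate $H(X_0|X_D) \leq \log K$ and use the identity $\sum_{D \subseteq [-n,-1]} P(E_0=1, E_D=1, E_{D^c}=0) = P(E_0=1) = 1-\eps$ to obtain
\begin{equation*}
H(Y_0 \mid Y_{-n}^{-1}) \;\leq\; H(E_0 \mid E_{-n}^{-1}) + (1-\eps)\log K.
\end{equation*}
Sending $n \to \infty$ gives $H(Y) \leq H(E) + (1-\eps)\log K$, hence $I(X;Y) \leq (1-\eps)\log K$ for every stationary $X$ supported on $\mathcal{S}$, and so $C(\mathcal{S},\eps) \leq (1-\eps)\log K$.

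For the lower bound, I would take the input $\hat{X}$ to be a maximal-entropy-rate stationary process supported on $\mathcal{S}$, whose existence (in fact as a Markov chain, by Parry) on any irreducible finite-type constraint gives $H(\hat X) = C(\mathcal{S},0)$. Feeding this $\hat{X}$ into Lemma~\ref{nformula} and using the monotonicity $H(\hat X_0 \mid \hat X_D) \geq H(\hat X_0 \mid \hat X_{-n}^{-1})$ for every $D \subseteq [-n,-1]$, I would get
\begin{equation*}
H(Y_0 \mid Y_{-n}^{-1}) \;\geq\; H(E_0 \mid E_{-n}^{-1}) + (1-\eps)\, H(\hat X_0 \mid \hat X_{-n}^{-1}),
\end{equation*}
and then let $n \to \infty$ to conclude $H(Y) \geq H(E) + (1-\eps)\, H(\hat X)$. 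Hence $C(\mathcal{S},\eps) \geq I(\hat X; Y) = H(Y) - H(E) \geq (1-\eps)\, C(\mathcal{S},0)$.

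I do not expect a serious obstacle here, precisely because Lemma~\ref{nformula} is tailor-made for this sandwich. The only delicate point is the identity $H(Y|X) = H(E)$, which relies on $0 \notin \mathcal{X}$ so that $E$ is recoverable from $Y$ alone; once this is in hand, both bounds reduce to the elementary extremal estimates $H(\hat X_0 \mid \hat X_{-n}^{-1}) \leq H(X_0 \mid X_D) \leq \log K$ combined with the summation identity $\sum_D P(E_0=1, E_D=1, E_{D^c}=0) = 1-\eps$.
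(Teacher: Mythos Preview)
Your proposal is correct and follows essentially the same approach as the paper: both use Lemma~\ref{nformula} to express $I(X;Y)$ as $\lim_n \sum_{D}H(X_0\mid X_D)P(E_0=1,E_D=1,E_{D^c}=0)$ and then sandwich $H(X_0\mid X_D)$ between $\log K$ (upper bound) and a lower bound tied to the Parry measure $\hat X$. The only cosmetic difference is in the lower bound: the paper conditions further on $\hat X_{-m}^{-1}$ and invokes the $m$-th order Markov property to obtain the constant $H(\hat X_0\mid \hat X_{-m}^{-1})=C(\mathcal S,0)$ directly, whereas you condition on all of $\hat X_{-n}^{-1}$ and pass to the limit; both routes yield the same inequality.
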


\begin{proof}
For the upper bound, it follows from Lemma~\ref{nformula} that
\begin{align}
I(X;Y)&=\lim_{n\to \infty}(H(Y_0|Y_{-n}^{-1})-H(Y_0|Y_{-n}^{-1},X_{-n}^0))\notag\\
&=\lim_{n\to \infty}\sum_{D \subseteq [-n, -1]}H(X_{0}|X_{D})P(E_0=1,E_{D}=1,E_{D^c}=0)\notag\\
&\stackrel{(a)}{\leq} \lim_{n\to \infty}\sum_{D \subseteq [-n,-1]}H(X_{0}) P(E_0=1,E_{D}=1,E_{D^c}=0)\notag\\
&\leq \lim_{n\to \infty}\sum_{D \subseteq [-n,-1]} P(E_0=1,E_{D}=1,E_{D^c}=0) \log K \notag\\
&=P(E_0=1) \log K \notag\\
&=(1-\eps) \log K, \notag
\end{align}
where we have used the fact that conditioning reduces entropy for $(a)$.

Assume $\mathcal{S}$ is of topological order $m$, and let $\hat{X}$ be the $m$-order Markov chain that achieves the noiseless capacity $C(\mathcal{S}, 0)$ of the constraint $\mathcal{\mathcal{S}}$. Again, it follows from Lemma~\ref{nformula} that
\begin{align}
I(\hat{X};Y)&=\lim_{n\to \infty}(H(Y_0|Y_{-n}^{-1})-H(Y_0|Y_{-n}^{-1},\hat{X}_{-n}^0))\notag\\
&=\lim_{n\to \infty}\sum_{D \subseteq [-n, -1]}H(\hat{X}_{0}|\hat{X}_{D})P(E_0=1,E_{D}=1,E_{D^c}=0)\notag\\
&\ge \lim_{n\to \infty}\sum_{D \subseteq [-n, -1]}H(\hat{X}_{0}|\hat{X}_{-m}^{-1},\hat{X}_{D})P(E_0=1,E_{D}=1,E_{D^c}=0)\notag\\
&\stackrel{(a)}{=} \lim_{n\to \infty}\sum_{D \subseteq [-n, -1]}H(\hat{X}_{0}|\hat{X}_{-m}^{-1})P(E_0=1,E_{D}=1,E_{D^c}=0)\notag\\
&=P(E_0=1) H(\hat{X}_{0}|\hat{X}_{-m}^{-1}) \notag\\
&=(1-\eps) C(\mathcal{S}, 0), \notag
\end{align}
where we have used the fact that $\{\hat{X}_n\}$ is an $m$-th order Markov chain for $(a)$.
\end{proof}

\begin{rem}
The upper bound part of Theorem~\ref{wolf'sconjecture} also follows from the well-known fact that (see Theorem~\ref{fbnot})
$$C(\mathcal{X}^*, \eps)=(1-\eps) \log K$$
and for any $\mathcal{S}$,
$$
C(\mathcal{S}, \eps) \leq C(\mathcal{X}^*, \eps),
$$
which is obviously true.

Let $C'(\mathcal{S},\eps)$ denote the capacity of a BSC($\eps$) with the $(d,k)$-RLL constraint. In~\cite{wolf'sconjecture} Wolf posed the following conjecture on $C'(\mathcal{S}, \eps)$:
$$
C'(\mathcal{S},\eps)\ge C'(\mathcal{S},0)(1-H(\eps)),
$$
where $H(\eps) \triangleq -\eps \log \eps- (1-\eps) \log (1-\eps)$. A weaker form of this bound has been established in~\cite{pkumar1992} by counting the possible subcodes satisfying the $(d,k)$-RLL constraint in some linear coding scheme, but the conjecture for the general case still remains open.

It is well known that $1-H(\eps)$ is the capacity of a BSC($\eps$) without any input constraint, and $1-\eps$ is the capacity of a BEC($\eps$) without any input constraint. So, for an input-constrained BEC($\eps$), the lower bound part of Theorem~\ref{wolf'sconjecture} gives a counterpart result of Wolf's conjecture.
\end{rem}

When applied to the channel with a Markovian input, Lemma~\ref{nformula} gives a relatively explicit series expansion type formula for the mutual information rate of (\ref{mec}).
\begin{theorem}\label{entropyformula}
Assume $\{X_n\}$ is an $m$-th order input Markov chain. Then,
\begin{eqnarray}\label{mutualinformationrate}
I(X;Y)&=&\sum_{k=0}^{\infty}\sum_{t=0}^{b(k-1,m)}\sum_{\{i_{1}^{t}\}\in B_{2}(k-1,t)}H(X_{0}|X_{i_{1}^{t}},X_{-k-m}^{-k-1})P(E_{A(k,i_1^t)}=1,E_{\bar{A}(k,i_1^t)}=0),
\end{eqnarray}
where
$A(k,i_1^t)=\{-k-m,\cdots,-k-1,i_1^t,0\}\,\mbox{and}\,\, \bar{A}(k,i_1^t)=\{-k-m,\cdots,0\}- A(k,i_1^t)$ and
$B_{2}(n,u)=\{\{i_1,\cdots,i_{u}\} \subseteq [-n, -1]: \mbox{for all $j=1,\cdots,u$}, \{i_{j},i_{j}+1,\cdots,i_{j}+m\} \not\subseteq \{ i_1,\cdots,i_{u}\}\}$ and $b(k-1,m)=(m-1)\left\lfloor \frac{k-1}{m}\right\rfloor+R(k-1)$, here $R(k-1)$ denotes the remainder of $k-1$ divided by $m$.
\end{theorem}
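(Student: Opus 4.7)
The plan is to start from the limiting identity $I(X;Y)=\lim_{n\to\infty}\bigl(H(Y_0|Y_{-n}^{-1})-H(Y_0|Y_{-n}^{-1},X_{-n}^0)\bigr)$ already used in the proof of Theorem~\ref{wolf'sconjecture}. Since $Y_i=X_iE_i$ together with $X_{-n}^{-1}$ recovers $E_{-n}^{-1}$, and $E$ is independent of $X$, the subtracted term collapses to $H(E_0|E_{-n}^{-1})$. Substituting Lemma~\ref{nformula} cancels that contribution and reduces the task to reorganizing
$$I(X;Y)=\lim_{n\to\infty}\sum_{D\subseteq[-n,-1]}H(X_0|X_D)\,P\bigl(E_0=1,E_D=1,E_{[-n,-1]\setminus D}=0\bigr)$$
into the claimed series indexed by $(k,i_1^t)$.

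The driving reduction uses the $m$-th order Markov property of $X$: whenever $D$ contains a block $\{-k-m,\dots,-k-1\}$ of $m$ consecutive integers, the conditional independence of $X_{D\cap(-\infty,-k-m-1]}$ and $X_0$ given $X_{-k-m}^{-k-1}$ (with $X_{D\cap[-k,-1]}$ tacked on as an extra conditioner) yields
$$H(X_0|X_D)=H\bigl(X_0\,\big|\,X_{-k-m}^{-k-1},X_{D\cap[-k+1,-1]}\bigr).$$
I will therefore partition the subsets $D$ according to the smallest $k\ge 0$ for which such a block sits inside $D$ --- geometrically, the ``rightmost'' length-$m$ run of $1$'s in the erasure pattern on $[-n,-1]$ --- leaving the ``left part'' $D\cap[-n,-k-m-1]$ completely unconstrained, to be summed out at the end.

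The combinatorial heart is characterizing the admissible ``right part'' $i_1^t:=D\cap[-k+1,-1]$. Minimality of $k$ translates to $\{-j-m,\dots,-j-1\}\not\subseteq D$ for every $j<k$. Going window by window: the $j=k-1$ case forces $-k\notin D$; for $j\in[k-m,k-2]$ the window already contains $-k$, so the constraint becomes automatic; and the remaining $j\in[0,k-m-1]$ windows lie entirely inside $[-k+1,-1]$ and amount precisely to $i_1^t$ containing no $m$ consecutive integers --- exactly the defining condition of $B_2(k-1,t)$. The packing bound $b(k-1,m)=(m-1)\lfloor(k-1)/m\rfloor+R(k-1)$ then drops out by noting that each disjoint length-$m$ window in $[-k+1,-1]$ can contribute at most $m-1$ indices.

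With $(k,i_1^t)$ fixed, marginalizing the erasure probability over the unconstrained left part $D\cap[-n,-k-m-1]$ collapses it to $P(E_{A(k,i_1^t)}=1,E_{\bar A(k,i_1^t)}=0)$, a quantity no longer depending on $n$; so the finite-$k$ contributions accumulate as $n\to\infty$ into the asserted double series. The main obstacle is disposing of the residual sum over those $D\subseteq[-n,-1]$ that contain no length-$m$ run of $1$'s at all, for which the Markov shielding is unavailable. This residual I would bound uniformly by $(\log K)\cdot P\bigl(E_{-n}^{-1}\text{ contains no }m\text{ consecutive }1\text{'s}\bigr)$, and then invoke stationarity and ergodicity of $E$ (together with $P(E_1^m=1^m)>0$, which holds in the non-degenerate regime) to drive this probability to $0$ via the ergodic theorem, closing the proof.
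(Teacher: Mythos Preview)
Your proposal is correct and follows essentially the same route as the paper: reduce via Lemma~\ref{nformula} to the sum $\sum_{D}H(X_0|X_D)P(\cdots)$, partition $D$ according to the rightmost length-$m$ block, use the $m$-th order Markov property to drop the tail, marginalize the erasure probability over the unconstrained left part, and kill the residual (those $D$ with no length-$m$ run) by bounding it by $(\log K)\,P(F_n)$ with $F_n=\{E_{-n}^{-1}\text{ has no }m\text{ consecutive }1\text{'s}\}$. The only cosmetic difference is that the paper invokes Poincar\'e recurrence on the block process $W_i=(E_i,\dots,E_{i-m+1})$ to conclude $P(F_n)\to 0$, whereas you appeal to the ergodic theorem; both arguments are equivalent here.
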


\begin{proof}
Note that
\begin{align*}
H(Y_0|X_{-n}^{0},Y_{-n}^{-1}) & =H(X_0E_0|X_{-n}^{0},E_{-n}^{-1},Y_{-n}^{-1})\\
&\stackrel{(a)}{=}H(E_0|E_{-n}^{-1}),
\end{align*}
where $(a)$ follows from the independence of $\{X_n\}$ and $\{E_{n}\}$. From Lemma~\ref{nformula}, it then follows that
\begin{align}\label{iformula}
I(X;Y)&=\lim_{n\to\infty}(H(Y_0|Y_{-n}^{-1})-H(Y_0|X_{-n}^{0},Y_{-n}^{-1}))\notag\\
&=\lim_{n\to\infty}\sum_{D \subseteq [-n, -1]}H(X_0|X_{D})P(E_0=1,E_{D}=1,E_{D^c}=0).
\end{align}
Now, letting
$$
B(n,u)=\{D \subseteq [-n, -1]:|D|=u\} \mbox{ and }
B_1(n,u)=B(n,u)-B_{2}(n,u),
$$
we deduce that, for $\{i_1^t\}\in B_2(k-1,t)$
{\small\begin{align*}
\sum_{D \subseteq[-n,-k-m-1]}P(E_{A(k,i_1^t)}=1,E_D=1,E_{\bar{A}(k,i_1^t)}=0,E_{[-n,-k-m-1]-D}=0)&=P(E_{A(k,i_1^t)}=1,E_{\bar{A}(k,i_1^t)}=0).
\end{align*}}
and
\begin{align*}
&\hspace{-1cm} \sum_{k=m}^{n}\ \sum_{\{i_1,\dots, i_k\}\in B_{1}(n,k)}H(X_0|X_{i_1^k})P(E_0=1,E_{i_1^k}=1,E_{\bar{i}_1^k}=0)\\
&=\sum_{k=0}^{n-m+1}\sum_{t=0}^{b(k-1,m)}\sum_{\{i_1^t\}\in B(k-1,t)}\sum_{D\subseteq [-n,-k-m-1]}\left\{H(X_0|X_{A(k,i_1^t)},X_D)\right.\\
&\hspace{4.5mm}\times \left.P(E_{A(k,i_1^t)}=1,E_D=1,E_{\bar{A}(k,i_1^t)}=0,E_{[-n,-k-m-1]-D}=0)\right\}\\
&\stackrel{(a)}{=}\sum_{k=0}^{n-m+1}\sum_{t=0}^{b(k-1,m)}\sum_{\{i_1^t\}\in B(k-1,t)}\sum_{D \subseteq [-n,-k-m-1]}\left\{H(X_0|X_{A(k,i_1^t)})\right.\\
&\hspace{4.5mm}\times\left.P(E_{A(k,i_1^t)}=1,E_D=1,E_{\bar{A}(k,i_1^t)}=0,E_{[-n,-k-m-1]-D}=0)\right\}\\
&=\sum_{k=0}^{n-m+1}\sum_{t=0}^{b(k-1,m)}\sum_{\{i_1^t\}\in B(k-1,t)}H(X_0|X_{A(k,i_1^t)})P(E_{A(k,i_1^t)}=1,E_{\bar{A}(k,i_1^t)}=0),
\end{align*}
where $(a)$ follows from the fact that $\{X_n\}$ is an $m$-th order Markov chain.
Then it follows that
\begin{align}
&\hspace{-1cm}\sum_{D \subseteq [-n, -1]}H(X_0|X_{D})P(E_0=1,E_{D}=1,E_{D^c}=0)\notag\\
&=\sum_{k=0}^{n}\ \sum_{\{i_1,\dots, i_k\}\in B(n,k)}H(X_0|X_{i_1^k})P(E_0=1,E_{i_1^k}=1,E_{\bar{i}_1^k}=0)\notag\\
&=\left(\sum_{k=m}^{n}\ \sum_{\{i_1,\dots, i_k\}\in B_{1}(n,k)}+\sum_{k=0}^{b(n,m)}\sum_{\{i_1,\dots, i_k\}\in B_{2}(n,k)}\right)H(X_0|X_{i_1^k})P(E_0=1,E_{i_1^k}=1,E_{\bar{i}_1^k}=0)\notag\\
&=\sum_{k=0}^{n-m+1}\sum_{t=0}^{b(k-1,m)}\sum_{\{i_1^t\}\in B(k-1,t)}H(X_0|X_{A(k,i_1^t)})P(E_{A(k,i_1^t)}=1,E_{\bar{A}(k,i_1^t)}=0)+T(n),\label{part}
\end{align}
where
\begin{eqnarray*}
T(n)&=&\sum_{k=0}^{b(n,m)}\sum_{\{i_1^k\}\in B_{2}(n,k)}H(X_0|X_{i_1^k})P(E_0=1,E_{i_1^k}=1,E_{\bar{i}_1^k}=0).
\end{eqnarray*}
It follows from $H(X_0|X_{i_1^k})\le \log K$ that
$$
T(n) \le \sum_{k=0}^{b(n,m)}\sum_{\{i_1^k\}\in B_{2}(n,k)}P(E_0=1,E_{i_1^k}=1,E_{\bar{i}_1^k}=0) \log K \le P(F_n) \log K,
$$
where $F_n$ is the event that ``there is no $m$ consecutive $1$'s in $E_{-n}^{-1}$''. Now, let $W_{i}=(E_{i},\cdots,E_{i-m+1})$ for $i\le -1$. Then it follows from the assumption that $W_{i}$ is also a stationary and ergodic process with $P(W_i=(1,\cdots,1))>0$. Using Poincare's recurrence theorem~\cite{durrettprobability}, we have that $P(W_{i}=(1,\cdots,1) \,\, i.o.)=1$, which implies that
$P(F)=0$, where $F$ denotes the event that ``there is no $m$ consecutive $1$'s in $E_{-\infty}^{-1}$''. This, together with the fact that $\lim_{n \to \infty} P(F_n)=P(F)$, implies that $\lim_{n \to \infty} T(n)=0$, and therefore the proof of the theorem is complete.
\end{proof}

The following corollary can be readily deduced from Theorem~\ref{entropyformula}.
\begin{corollary}\label{memorylessec}
Assume that $\{E_n\}$ is i.i.d. and $\{X_{n}\}$ is an $m$-th order Markov chain. Then
\begin{eqnarray}\label{high}
I(X;Y)&=&(1-\eps)^{m+1}\sum_{k=0}^{\infty}\sum_{t=0}^{b(k-1,m)}a(k,t)(1-\eps)^{t}\eps^{k-t},
\end{eqnarray}
where
\begin{equation*}
a(k,t)=\sum_{\{i_{1}\dots i_{t}\}\in B_{2}(k-1,t)}H(X_{0}|X_{i_{1}^{t}},X_{-k-m}^{-k-1}).
\end{equation*}
In particular, if $\{X_{n}\}$ is a first-order Markov chain,
\begin{equation}\label{first}
I(X;Y)=(1-\eps)^{2}\sum_{k=0}^{\infty}H(X_{0}|X_{-k-1})\eps^{k}.
\end{equation}
\end{corollary}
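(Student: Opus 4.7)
The plan is to specialize Theorem~\ref{entropyformula} by exploiting the independence structure of the erasure process. Since $\{E_n\}$ is i.i.d.\ with $P(E_n=1)=1-\eps$ and $P(E_n=0)=\eps$, the joint probabilities appearing in (\ref{mutualinformationrate}) factorize into products of single-coordinate probabilities, and all that remains is to count how many indices sit in $A(k,i_1^t)$ versus its complement $\bar{A}(k,i_1^t)$.

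A quick index count gives the sizes. The set $A(k,i_1^t)=\{-k-m,\dots,-k-1\}\cup\{i_1,\dots,i_t\}\cup\{0\}$ is a disjoint union of three blocks (the $i_j$'s live in $[-(k-1),-1]$, which is disjoint from both $\{-k-m,\dots,-k-1\}$ and $\{0\}$), so $|A(k,i_1^t)|=m+t+1$. The ambient index set $\{-k-m,\dots,0\}$ has cardinality $k+m+1$, hence $|\bar{A}(k,i_1^t)|=k-t$, and therefore
$$
P(E_{A(k,i_1^t)}=1,E_{\bar{A}(k,i_1^t)}=0)=(1-\eps)^{m+t+1}\eps^{k-t}.
$$
Substituting this into (\ref{mutualinformationrate}) and pulling the common factor $(1-\eps)^{m+1}$ out of the triple sum yields exactly (\ref{high}) with $a(k,t)$ as defined in the statement.

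To deduce (\ref{first}), I would specialize to $m=1$. Since any integer is divisible by $1$ with zero remainder, $R(k-1)=0$ and hence $b(k-1,1)=0$; consequently the sum over $t$ collapses to the single value $t=0$, and $B_2(k-1,0)$ reduces to $\{\emptyset\}$. The coefficient $a(k,0)$ then reduces to the single term $H(X_0|X_{-k-1})$, and (\ref{first}) follows immediately.

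No serious obstacle is expected: this is essentially a bookkeeping corollary of Theorem~\ref{entropyformula} once the product factorization afforded by i.i.d.\ erasures is invoked. The only mild care required is with the boundary conventions when $t=0$, so that $X_{i_1^t}$ is read as empty conditioning and $B_2(k-1,0)=\{\emptyset\}$ contributes a single summand.
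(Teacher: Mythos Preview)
Your proposal is correct and matches the paper's intended approach: the paper simply states that the corollary ``can be readily deduced from Theorem~\ref{entropyformula}'' without spelling out the details, and your argument is precisely the cardinality count and i.i.d.\ factorization that this deduction requires. The boundary check for $m=1$ (that $b(k-1,1)=0$ forces $t=0$ and collapses the inner sums) is handled correctly.
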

\begin{rem}\label{verduerasure}
A series expansion type formula for $H(X|Y)$ different from (\ref{first}) is given in Theorem $12$ of~\cite{vwerasure} for a discrete memoryless erasure channel with a first-order input Markov chain. It can be verified that these two formulas are ``equivalent'' in the sense that either one can be deduced from the other one via simple derivations. The form that our formula takes however makes it particularly effective for the capacity analysis of an input-constrained erasure channel.
\end{rem}

\section{Input-Constrained Memoryless Erasure Channel} \label{general-asymptotics}

In this section, we will focus on the case when $\{E_n\}$ is i.i.d. and $\mathcal{S}$ is an irreducible finite-type constraint of topological order $m$. With Lemma~\ref{nformula} and Corollary~\ref{memorylessec} established, we are ready to characterize the asymptotics of the capacity of this type of erasure channels.

As mentioned in Section~\ref{introduction-section}, when $\eps=0$, it is well known~\cite{parryintrinsicmarkovchains} that there exists an $m$th-order Markov chain $\hat{X}$ with $\mathcal{A}(\hat{X})=\mathcal{S}$ such that
\begin{equation} \label{Parry}
H(\hat{X})=H(\hat{X}_0|\hat{X}_{-m}^{-1})=\max_{\mathcal{A}(X) \subseteq \mathcal{S}}H(X)=C(\mathcal{S}, 0),
\end{equation}
where the maximization is over all stationary processes supported on $\mathcal{S}$. The following theorem characterizes the asymptotics of $C(\mathcal{S}, \eps)$ near $\eps=0$.
\begin{theorem}\label{asyerasurec}
Assume that $\{E_n\}$ is i.i.d. Then,
\begin{equation}\label{highsnrasy}
C(\mathcal{S},\eps)=C(\mathcal{S},0)-\left\{(m+1)H(\hat{X}_0|\hat{X}_{-m}^{-1})-\sum_{i=1}^{m}H(\hat{X}_{0}|\hat{X}_{-i+1}^{-1},\hat{X}_{-i-m}^{-i-1})\right\}\eps+O(\eps^2).
\end{equation}
Moreover, for any $n \geq m$, $C^{(n)}(\mathcal{S}, \eps)$ is of the same asymptotic form as in (\ref{asyerasurec}), namely,
\begin{equation}\label{highsnrasy-1}
C^{(n)}(\mathcal{S},\eps)=C(\mathcal{S}, 0)-\left\{(m+1)H(\hat{X}_0|\hat{X}_{-m}^{-1})-\sum_{i=1}^{m}H(\hat{X}_{0}|\hat{X}_{-i+1}^{-1},\hat{X}_{-i-m}^{-i-1})\right\}\eps+O(\eps^2).
\end{equation}
\end{theorem}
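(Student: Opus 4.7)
The plan is to prove (\ref{highsnrasy}) by matching a lower bound and an upper bound of the form $C(\mathcal{S},0)-c\eps+O(\eps^2)$, with
\[
c=(m+1)H(\hat{X}_0|\hat{X}_{-m}^{-1})-\sum_{i=1}^{m}H(\hat{X}_{0}|\hat{X}_{-i+1}^{-1},\hat{X}_{-i-m}^{-i-1}),
\]
and to deduce (\ref{highsnrasy-1}) by sandwiching. For any $n\ge m$, $\hat{X}$ is an $n$-th order Markov chain, so $I(\hat{X};Y)\le C^{(n)}(\mathcal{S},\eps)\le C(\mathcal{S},\eps)$, and the bounds obtained for $C(\mathcal{S},\eps)$ will automatically squeeze $C^{(n)}(\mathcal{S},\eps)$ to the same asymptotic form.

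The lower bound reduces to Corollary~\ref{memorylessec} applied to the unique Parry chain $\hat{X}$ from (\ref{Parry}). Writing
\[
I(\hat{X};Y)=(1-\eps)^{m+1}S(\eps), \qquad S(\eps)=\sum_{k=0}^{\infty}\sum_{t=0}^{b(k-1,m)}a(k,t)(1-\eps)^{t}\eps^{k-t},
\]
I would Taylor expand at $\eps=0$. The only $(k,t)$ with $k-t=0$ that is admissible is $(0,0)$, giving $S(0)=a(0,0)=H(\hat{X}_0|\hat{X}_{-m}^{-1})=C(\mathcal{S},0)$. The $\eps$-coefficient of $S$ collects pairs with $t=k-1$, which satisfies $t\le b(k-1,m)$ precisely when $k=1,\dots,m$, and in each of those cases $B_2(k-1,k-1)$ reduces to the singleton $\{\{-k+1,\dots,-1\}\}$, yielding $a(k,k-1)=H(\hat{X}_0|\hat{X}_{-k+1}^{-1},\hat{X}_{-k-m}^{-k-1})$. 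Combining with $(1-\eps)^{m+1}=1-(m+1)\eps+O(\eps^2)$ produces $I(\hat{X};Y)=C(\mathcal{S},0)-c\eps+O(\eps^2)$, and $C(\mathcal{S},\eps)\ge I(\hat{X};Y)$ supplies the lower bound.

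For the upper bound I would adapt the strategy of \cite{asymptotics-binary}. Using $C(\mathcal{S},\eps)=\lim_{n\to\infty}C^{(n)}(\mathcal{S},\eps)$, it suffices to show $C^{(n)}(\mathcal{S},\eps)\le C(\mathcal{S},0)-c\eps+O(\eps^2)$ uniformly in $n\ge m$. For any $n$-th order Markov $X$ supported on $\mathcal{S}$, Corollary~\ref{memorylessec} again yields a series whose leading term $(1-\eps)^{n+1}H(X_0|X_{-n}^{-1})$ is controlled by the Parry bound $H(X_0|X_{-n}^{-1})\le C(\mathcal{S},0)$. To pin down the $\eps$-coefficient one exploits that $\hat{X}$ is the \emph{unique} noiseless-capacity achiever, so any optimizer $X^*_\eps$ for $C^{(n)}(\mathcal{S},\eps)$ must concentrate near $\hat{X}$ as $\eps\to 0$; a perturbation analysis of the series about $\hat{X}$ then forces the $\eps$-coefficient of $I(X^*_\eps;Y)$ to match $-c$ to within $O(\eps^2)$. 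This last step is the main obstacle: the crude bound $C(\mathcal{S},\eps)\le(1-\eps)\log K$ from Theorem~\ref{wolf'sconjecture} has $\eps$-coefficient $-\log K$, in general too loose to deliver $-c$, so the sharp bound genuinely requires exploiting uniqueness and continuity of the capacity achiever around $\hat{X}$ together with uniform control in the Markov order $n$ over the remainder of the series.
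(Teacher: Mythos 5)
Your lower bound follows the paper's proof essentially verbatim, and the bookkeeping of which $(k,t)$ pairs contribute at orders $\eps^0$ and $\eps^1$ is correct. However, you never justify that the infinite tail $\sum_{k\ge m+1}\sum_t a(k,t)(1-\eps)^t\eps^{k-t}$ is $O(\eps^2)$ with a constant independent of the truncation: the series has infinitely many terms and $a(k,t)$ grows combinatorially in $k$, so Taylor-expanding termwise needs an honest remainder estimate. The paper handles this by interpreting $\sum_t|B_2(k-1,t)|(1-\eps)^{t+m}\eps^{k-t}$ as $P(\hat F_k)$ for an explicit sequence of events $\hat F_k$ partitioning the sample space, and then bounding the tail by $(1-\sum_{k=0}^m P(\hat F_k))\log K=O(\eps^2)$ — this probabilistic identity is what makes the bound uniform, and it is not optional.

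The upper bound is where the proposal is genuinely incomplete, and also slightly misframed. You set it up as ``show $C^{(n)}(\mathcal{S},\eps)\le C(\mathcal{S},0)-c\eps+O(\eps^2)$ uniformly in $n\ge m$ and pass to the limit.'' Requiring uniformity in the Markov order is a much harder ask than what is actually needed, and you correctly note that your sketch stalls there (``this last step is the main obstacle''). The paper sidesteps uniformity entirely: since $\{E_n\}$ is i.i.d., $I(X;Y)=\lim_n(H(Y_0|Y_{-n}^{-1})-H(\eps))$ and this limit is decreasing, so for any fixed $n$ and any stationary $X$ on $\mathcal{S}$ one has $I(X;Y)\le H(Y_0|Y_{-n}^{-1})-H(\eps)$, which depends on $X$ only through the finite marginal $\p_n$. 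Hence $C(\mathcal{S},\eps)\le C_n(\mathcal{S},\eps):=\sup_{\p_n}H(Y_0|Y_{-n}^{-1})-H(\eps)$ for one fixed $n$, and it suffices to expand this single finite-dimensional optimization around $\hat{\p}_n$. The hard content you flag — using uniqueness of the Parry measure to control where the optimizer sits — is then carried out concretely: the objective $f(\p_n,\eps)$ is continuous, maximized at $\hat\p_n$ when $\eps=0$, its Hessian in $\p_n$ is negative definite at $\hat\p_n$ (this is Lemma~3.1 of~\cite{asymptotics-binary}), and a quadratic-versus-linear comparison in the perturbation $\mathbf{q}_n=\p_n-\hat\p_n$ shows that the $\eps$-linear gain from moving off $\hat\p_n$ is dominated by the quadratic loss, giving $f(\p_n,\eps)\le g_1(\hat\p_n)+g_2(\hat\p_n)\eps+M_1\eps^2$. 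Without this finite-$n$ reduction and the explicit Hessian argument, your upper bound does not close, so the proposal proves the easy half and only gestures at the hard half.

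Finally, your sandwich argument for $C^{(n)}(\mathcal{S},\eps)$ (squeeze between $I(\hat X;Y)$ and $C(\mathcal{S},\eps)$ for $n\ge m$) is correct and matches the paper's one-line observation at the end of the proof.
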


\begin{proof}
To establish~(\ref{highsnrasy}), we prove that $C(\mathcal{S}, \eps)$ is lower and upper bounded by the same asymptotic form as in (\ref{highsnrasy}).

For the lower bound part, we consider the channel (\ref{mec}) with $\hat{X}$ as its input. Note that
$$
P(\hat{F}_0)=(1-\eps)^{m} \mbox{ and } P(\hat{F}_k)=\eps(1-\eps)^{m}\quad \mbox{for $1\le k\le m$},
$$
and furthermore, for $k\ge m+1$
$$
P(\hat{F}_k)=\sum_{t=0}^{b(k-1,m)} |B_{2}(k-1,t)| (1-\eps)^{t+m}\eps^{k-t},
$$
where we have defined
$$
\hat{F}_{k}=\{E_{-k-1}^{-k-m}=1, E_{-k}=0, E_{-k+1}^{-1}\ \mbox{contains no $m$ consecutive $1$'s}\}.
$$
It then follows that
\begin{eqnarray}
&& \hspace{-2cm} (1-\eps)^{m+1}\sum_{k=m+1}^{\infty}\sum_{t=0}^{b(k-1,m)}a(k,t)(1-\eps)^{t}\eps^{k-t}\notag\\
&\leq &(1-\eps) \sum_{k=m+1}^{\infty}P(\hat{F}_{k}) \log K\notag\\
&\stackrel{(a)}{=}&(1-\eps) (1-\sum_{k=0}^{m}P(\hat{F}_{k}) \log K\notag\\
&=&(m\eps^2(1-\eps)^m+\sum_{u=2}^{m}{m\choose u}(1-\eps)^{m-u}\eps^u)\log K \notag\\
&=&O(\eps^2), \label{tailestimate}
\end{eqnarray}
where $(a)$ follows from $P(\cup_{k\ge 0} \hat{F}_k)=1$ and the constant in $O(\eps^2)$ depends only on $m$ and $K$.
Then, from Corollary~\ref{memorylessec}, it follows that
\begin{align}\label{lowerbound}
C(\mathcal{S}, \eps)&\ge I(\hat{X};Y)=(1-\eps)^{m+1}\sum_{k=0}^{\infty}\sum_{t=0}^{b(k-1,m)}a(k,t)(1-\eps)^{t}\eps^{k-t}\notag\\
&=(1-\eps)^{m+1}\sum_{k=0}^{m}\sum_{t=0}^{b(k-1,m)}a(k,t)(1-\eps)^{t}\eps^{k-t}+(1-\eps)^{m+1}\sum_{k=m+1}^{\infty}\sum_{t=0}^{b(k-1,m)}a(k,t)(1-\eps)^{t}\eps^{k-t}\notag\\
&\stackrel{(b)}{=}H(\hat{X}_0|\hat{X}_{-m}^{-1})+\left\{(m+1)H(\hat{X}_0|\hat{X}_{-m}^{-1})-\sum_{i=1}^{m}H(\hat{X}_{0}|\hat{X}_{-i+1}^{-1},\hat{X}_{-i-m}^{-i-1})\right\}\eps+O(\eps^2), \notag
\end{align}
where $(b)$ follows from (\ref{tailestimate}) and
{\small
$$
\hspace{-1cm} \sum_{k=0}^{m}\sum_{t=0}^{b(k-1,m)}a(k,t)(1-\eps)^{t+m+1}\eps^{k-t}=H(\hat{X}_0|\hat{X}_{-m}^{-1})+\left\{(m+1)H(\hat{X}_0|\hat{X}_{-m}^{-1})-\sum_{i=1}^{m}H(\hat{X}_{0}|\hat{X}_{-i+1}^{-1},\hat{X}_{-i-m}^{-i-1})\right\}\eps+O(\eps^2).
$$}
\noindent This, together with (\ref{Parry}), establishes that $C(\mathcal{S}, \eps)$ is lower bounded by the asymptotic form in (\ref{highsnrasy}).

For the upper bound part, we will adapt the argument in~\cite{asymptotics-binary}. Let
$$
S_n=\left\{ \mathbf{p}_n=(p(\hat{x}_{-n}^0):\hat{x}_{-n}^0\in \mathcal{A}(\hat{X}_{-n}^0)):p(\hat{x}_{-n}^0)>0,\sum_{\hat{x}_0^n\in \mathcal{A}(\hat{X}_{-n}^0)}p(\hat{x}_{-n}^0)=1 \right\}
$$
and
$$
S_{n,\delta}=\{\mathbf{p}_n\in S_n:p(\hat{x}_{-n}^{0})>\delta \mbox{ for any } \hat{x}_{-n}^{0}\in \mathcal{A}(\hat{X}_{-n}^0)\},
$$
where $\mathcal{A}(\hat{X}_{-n}^0)=\{\hat{x}_{-n}^0: p(\hat{x}_n^0)>0\}$.
In this proof, we define
$$
C_{n}(\eps,\mathcal{S})=\sup_{\p_n\in S_n}H(Y_0|Y_{-n}^{-1})-H(\eps).
$$
It then follows from Lemma~\ref{nformula} that
$$
C_{n}(\mathcal{S}, \eps) =\sup_{\p_n\in S_n} f(\p_n,\eps),
$$
where
$$
f(\p_n,\eps) \triangleq \sum_{k=1}^{n}\sum_{D \subseteq [-n, -1],|D|=k} H(X_0|X_{D})(1-\eps)^{k+1}\eps^{n-k}.
$$
Let $\overline{\p}_n(\eps)$ maximize $f(\p_n,\eps)$. As $f(\p_n,\eps)$ is continuous in $(\p_n,\eps)$ and is maximized at $\hat{\p}_n$ when $\eps=0$, there exists some $\eps_0>0$ (depends on $n$) and $\delta>0$ such that for all $\eps<\eps_0$, $\overline{\p}_n(\eps)\in S_{n,\delta}$. Then for $\eps\le \eps_0$, there exists some constant $M$ (depends on $n$) such that
$$
C_{n}(\mathcal{S}, \eps)\le \sup_{\p_n\in S_{n,\delta} } \left\{H(X_0|X_{-n}^{-1})+\left((n+1)H(X_0|X_{-n}^{-1})-\sum_{k=1}^{n}H(X_0|X_{-k+1}^{-1},X_{-n}^{-k-1})\right)\eps\right\}+M \eps^2.
$$
From now on, we write
$$
g_1(\p_n)=H(X_0|X_{-n}^{-1}), \quad g_2(\p_n)=\left((n+1)H(X_0|X_{-n}^{-1})-\sum_{k=1}^{n}H(X_0|X_{-k+1}^{-1},X_{-n}^{-k-1})\right)\eps.
$$
Letting $\mathbf{H}=\mathbf{H}(\p_n)$ be the Hessian of $g_1(\p_n)$, we now expand $g_1(\p_n)$ and $g_2(\p_n)$ around $\mathbf{p}_n=\hat{\p}_n$ to obtain
$$
g_1(\p_n)=g_1(\hat{\p}_n)+\frac{1}{2}{\mathbf q}_n^{T}\mathbf{H}{\mathbf q}_n+O(|{\mathbf q}_n|^2)
$$
and
$$
g_2(\p_n)=g_2(\hat{\p}_n)+\sum_{\hat{x}_{-n}^0\in \mathcal{A}(\hat{X}_{-n}^0)}\frac{\partial g_2(\hat{\p}_n)}{\partial p(\hat{x}_{-n}^0)}q(\hat{x}_{-n}^0)+O(|{\mathbf q}_n|).
$$
where ${\mathbf q}_n \triangleq \p_n-\hat{\p}_n$ contains all $q(\hat{x}_{-n}^0)$ as its coordinates. Since $\mathbf{H}$ is negative definite (see Lemma 3.1~\cite{asymptotics-binary}), we deduce that, for $|{\mathbf q}_n|$ sufficiently small,
$$
g_1(\p_n)+g_2(\p_n)\eps\le g_1(\hat{\p}_n)+g_2(\hat{\p}_n)\eps+\frac{1}{4}{\mathbf q}_n^{T}\mathbf{H}{\mathbf q}_n+2\sum_{\hat{x}_{-n}^0\in \mathcal{A}(\hat{X}_{-n}^0)}\left| \frac{\partial g_2(\hat{\p}_n)}{\partial p(\hat{x}_{-n}^0)}q(\hat{x}_{-n}^0)\right|\eps.
$$
Without loss of generality, we henceforth assume $\mathbf{H}$ is a diagonal matrix with all diagonal entries denoted $k(\hat{x}_{-n}^0) < 0$ (since otherwise we can diagonalize $\mathbf{H}$). Now, let
$$
\mathcal{A}_1(\hat{X}_{-n}^0)=\left\{\hat{x}_{-n}^0: \frac{1}{4}k(\hat{x}_{-n}^0)\left|q(\hat{x}_{-n}^0)\right|^2+2\left|\frac{\partial g_2(\hat{\p}_n)}{\partial p(\hat{x}_{-n}^0)}\right|\left|q(\hat{x}_{-n}^0)\right|\eps<0 \right\}
$$
and let $\mathcal{A}_2(\hat{X}_{-n}^0)$ denote the complement of $\mathcal{A}_1(\hat{X}_{-n}^0)$ in $\mathcal{A}(\hat{X}_{-n}^0)$:
$$
\mathcal{A}_2(\hat{X}_{-n}^0)=\mathcal{A}(\hat{X}_{-n}^0)-\mathcal{A}_1(\hat{X}_{-n}^0).
$$
Then, we have
\begin{align}
f(\p_n,\eps)&\le  g_1(\hat{\p}_n)+g_2(\hat{\p}_n)\eps+\frac{1}{4}{\mathbf q}_n^{T} \mathbf{H} {\mathbf q}_n+2\sum_{\hat{x}_{-n}^0\in \mathcal{A}(\hat{X}_{-n}^0)}\left|\frac{\partial g_2(\hat{\p}_n)}{\partial p(\hat{x}_{-n}^0))}\right|\left|q(\hat{x}_{-n}^0)\right|\eps+M\eps^2\notag\\
&\le g_1(\hat{\p}_n)+g_2(\hat{\p}_n)\eps+\sum_{\hat{x}_{-n}^0\in \mathcal{A}_2(\hat{X}_{-n}^0)}\left(\frac{1}{4}q(\hat{x}_{-n}^0)^2k(\hat{x}_{-n}^0)+2\left|\frac{\partial g_2(\hat{\p}_n)}{\partial p(\hat{x}_{-n}^0)}\right|\left|q(\hat{x}_{-n}^0)\right|\eps\right)+M\eps^2\notag\\
&\stackrel{(a)}{\le} g_1(\hat{\p}_n)+g_2(\hat{\p}_n)\eps+\sum_{\hat{x}_{-n}^0\in \mathcal{A}_2(\hat{X}_{-n}^0)}\frac{4\left|\frac{\partial g_2(\hat{\p}_n)}{\partial p(\hat{x}_{-n}^0)}\right|^2\eps^2}{-k(\hat{x}_{-n}^0)}+M\eps^2\notag,
\end{align}
where $(a)$ follows from the easily verifiable fact that for any $\hat{x}_{-n}^0\in \mathcal{A}_2(\hat{X}_{-n}^0)$,
$$
\frac{1}{4}q(\hat{x}_{-n}^0)^2k(\hat{x}_{-n}^0)+2\left|\frac{\partial g_2(\hat{\p}_n)}{\partial p(\hat{x}_{-n}^0)}\right|\left|q(\hat{x}_{-n}^0)\right|\eps\le \frac{4\left|\frac{\partial g_2(\hat{\p}_n)}{\partial p(\hat{x}_{-n}^0)}\right|^2\eps^2}{-k(\hat{x}_{-n}^0)}.
$$
Since $\hat{X}_{-n}^0$ is an $m$-th order Markov chain, we deduce that
\begin{align*}
g_1(\hat{\p}_n)+g_2(\hat{\p}_n)\eps&=H(\hat{X}_0|\hat{X}_{-n}^{-1})+\left((n+1)H(\hat{X}_0|\hat{X}_{-n}^{-1})-\sum_{k=1}^{n}H(\hat{X}_0|\hat{X}_{-k+1}^{-1},\hat{X}_{-n}^{-k-1})\right)\eps\\
&=H(\hat{X}_0|\hat{X}_{-m}^{-1})+\left((m+1)H(\hat{X}_0|\hat{X}_{-m}^{-1})-\sum_{k=1}^{m}H(\hat{X}_0|\hat{X}_{-k+1}^{-1},\hat{X}_{-m}^{-k-1})\right)\eps,
\end{align*}
We now ready to deduce that, for some positive constant $M_1$,
$$
f(\mathbf{p}_n,\eps)\le H(\hat{X}_0|\hat{X}_{-m}^{-1})+\left((m+1)H(\hat{X}_0|\hat{X}_{-m}^{-1})-\sum_{k=1}^{m}H(\hat{X}_0|\hat{X}_{-k+1}^{-1},\hat{X}_{-m}^{-k-1})\right)\eps+M_1\eps^2,
$$
which further implies that
$$
C(\eps,\mathcal{S})\le C_{n}(\eps,\mathcal{S})\le H(\hat{X}_0|\hat{X}_{-m}^{-1})+\left((m+1)H(\hat{X}_0|\hat{X}_{-m}^{-1})-\sum_{k=1}^{m}H(\hat{X}_0|\hat{X}_{-k+1}^{-1},\hat{X}_{-m}^{-k-1})\right)\eps+M_1\eps^2.
$$
The proof of (\ref{highsnrasy}) is then complete.

With $C(\mathcal{S}, \eps)$ replaced with $C^{(m)}(\mathcal{S}, \eps)$, the proof of (\ref{highsnrasy}) also establishes (\ref{highsnrasy-1}).
\end{proof}

\begin{rem}
In a fairly general setting (where input constraints are not considered), a similar asymptotic formula with a constant term, a term linear in $\eps$ and a residual $o(\eps)$-term has been derived in Theorem $23$ of~\cite{vwerasure}.
\end{rem}

As an immediate corollary of Theorem~\ref{asyerasurec}, the following result gives asymptotics of the capacity of a BEC($\eps$) with the input supported on the $(1, \infty)$-RLL constraint $\mathcal{S}_0$.
\begin{corollary}\label{asmpt}
Assume $K=2$ and $\{E_n\}$ is i.i.d. Then, we have
$$
C(\mathcal{S}_0, \eps)=\log\lambda-\frac{2\log 2}{1+\lambda^2}\eps+O(\eps^2),
$$
and for any $n \geq 1$, $C^{(n)}(\mathcal{S}_0, \eps)$ is of the same asymptotic form, namely,
\begin{equation} \label{first-order-partial}
C^{(n)}(\mathcal{S}_0, \eps)=\log\lambda-\frac{2\log 2}{1+\lambda^2}\eps+O(\eps^2).
\end{equation}
\end{corollary}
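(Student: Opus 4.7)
The plan is to specialize Theorem~\ref{asyerasurec} to the case $K = 2$ and $\mathcal{S} = \mathcal{S}_0$. Since the $(1,\infty)$-RLL constraint has topological order $m = 1$ (as noted in Section~\ref{introduction-section}), the bracketed coefficient of $\eps$ in (\ref{highsnrasy}) collapses to $2H(\hat X_0|\hat X_{-1}) - H(\hat X_0|\hat X_{-2})$, because the sum over $i$ contains a single term and the conditioning block $\hat X_{-i+1}^{-1}$ is empty for $i = 1$. So proving the corollary reduces to evaluating $C(\mathcal{S}_0, 0)$, $H(\hat X_0|\hat X_{-1})$ and $H(\hat X_0|\hat X_{-2})$ for the Parry measure $\hat X$ of $\mathcal{S}_0$.

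First I would write down the Parry measure explicitly. The underlying adjacency matrix, indexed by $\{1,2\}$, has all entries equal to $1$ except the $(2,2)$-entry which is $0$; its Perron eigenvalue is the golden ratio $\lambda$ satisfying $\lambda^2 = \lambda + 1$. The standard Parry construction then yields transitions $P_{11} = 1/\lambda$, $P_{12} = 1/\lambda^2$, $P_{21} = 1$, $P_{22} = 0$, with stationary distribution $\pi_1 = \lambda^2/(\lambda^2+1)$, $\pi_2 = 1/(\lambda^2+1)$. By Parry's theorem (or equivalently a direct calculation using $\lambda^2 = \lambda + 1$), we get $C(\mathcal{S}_0, 0) = H(\hat X_0|\hat X_{-1}) = \log\lambda$.

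Next I would compute $H(\hat X_0|\hat X_{-2})$ from the two-step transition matrix $P^{(2)}$, whose entries work out to $P^{(2)}_{11} = 2/\lambda^2$, $P^{(2)}_{12} = 1/\lambda^3$, $P^{(2)}_{21} = 1/\lambda$, $P^{(2)}_{22} = 1/\lambda^2$. Expanding $H(\hat X_0|\hat X_{-2}) = -\sum_{i,j}\pi_i P^{(2)}_{ij}\log P^{(2)}_{ij}$ and separating the lone $\log 2$ term (coming from $P^{(2)}_{11}$) from the various $\log\lambda$ terms, everything simplifies via repeated use of $\lambda^2 = \lambda + 1$ to
$$H(\hat X_0|\hat X_{-2}) = 2\log\lambda - \frac{2\log 2}{\lambda^2 + 1}.$$
Substituting into the specialization of Theorem~\ref{asyerasurec} then yields the desired linear coefficient $2\log\lambda - \bigl(2\log\lambda - 2(\log 2)/(\lambda^2+1)\bigr) = 2(\log 2)/(\lambda^2+1)$, proving the first display. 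The statement (\ref{first-order-partial}) for $C^{(n)}$ follows immediately from (\ref{highsnrasy-1}), which already applies for every $n \geq m = 1$.

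The main obstacle is purely algebraic: verifying that the $\log\lambda$ coefficient inside $H(\hat X_0|\hat X_{-2})$ collapses to exactly $2$, so that it cancels against $2H(\hat X_0|\hat X_{-1})$ and leaves only the clean $\log 2$ term. This hinges on the identity $\lambda^2(\lambda^2+1) = 4\lambda + 3$ paired with $4\lambda^2 + 4\lambda + 2 = 2(4\lambda + 3)$, both immediate consequences of $\lambda^2 = \lambda + 1$. Everything else is bookkeeping, and no new ideas beyond Theorem~\ref{asyerasurec} are needed.
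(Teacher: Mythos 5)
Your proposal is correct and follows essentially the same route as the paper: specialize Theorem~\ref{asyerasurec} with $m=1$ to reduce the linear coefficient to $2H(\hat X_0|\hat X_{-1}) - H(\hat X_0|\hat X_{-2})$, identify the Parry measure on $\mathcal{S}_0$ (note $1/\lambda = 1 - 1/\lambda^2$, so your transition matrix agrees with the paper's), and compute $H(\hat X_0|\hat X_{-2}) = 2\log\lambda - \tfrac{2\log 2}{1+\lambda^2}$ via the two-step transition matrix; the algebraic identities you invoke are correct consequences of $\lambda^2 = \lambda + 1$. The appeal to (\ref{highsnrasy-1}) for the $C^{(n)}$ statement likewise matches the paper.
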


\begin{proof}
Let $\lambda=(1+\sqrt{5})/2$. It is well known~\cite{symbolicmarcuslind} that the noiseless capacity
$$
C(\mathcal{S}_0, 0)=\log \lambda
$$
and the first-order Markov chain $\{\hat{X}_n\}$ with the following transition probability matrix
\begin{equation}\label{achievingmatrix}
\Pi=\begin{bmatrix}
1-1/\lambda^2 &1/\lambda^2\\
1&0\\
\end{bmatrix},
\end{equation}
achieves the noiseless capacity, that is, $H(\hat{X})=C(\mathcal{S}_0, 0)=\log \lambda$. Furthermore, via straightforward computations, we deduce that
$$
H(\hat{X}_{0}|\hat{X}_{-2})=\frac{\lambda^{2}}{1+\lambda^2}H(2/\lambda^2)+\frac{1}{1+\lambda^2}H(1/\lambda^2)=2\log \lambda-\frac{2\log 2}{1+\lambda^2},
$$
which, together with the fact that
$$
H(\hat{X})=H(\hat{X}_{0}|\hat{X}_{-1})=\log\lambda,
$$
implies
$$
C(\mathcal{S}_0, \eps)=\log\lambda-\frac{2\log 2}{1+\lambda^2}\eps+O(\eps^2),
$$
as desired.

And the asymptotic form of $C^{(n)}(\mathcal{S}_0, \eps)$ follows from similar computations.
\end{proof}

\begin{rem}
The asymptotic form in (\ref{first-order-partial}) only gives partial asymptotics of $C^{(n)}(\mathcal{S}_0, \eps)$ for any $n \geq 1$. For the case $n=1$, we will derive later on the full asymptotics of $C^{(1)}(\mathcal{S}_0, \eps)$; see (\ref{first-order-full}) in Section~\ref{binary-asymptotics}.

\end{rem}

\section{Input-Constrained Binary Erasure Channel}  \label{binary-asymptotics}

In this section, we will focus on a BEC($\eps$) with the input being a first-order Markov process supported on the $(1, \infty)$-RLL constraint $\mathcal{S}_0$. To be more precise, we assume that $K=2$, $\{E_n\}$ is i.i.d. and $\{X_{n}\}$ is a first-order Markov chain, taking values in $\{1,2\}$ and having the following transition probability matrix:
$$
\Pi=\begin{bmatrix}
1-\theta&\theta\\
1&0\\
\end{bmatrix}.
$$
In Section~\ref{sub-1}, we will show that $I(X; Y)$ is concave with respect to $\theta$, and in Section~\ref{sub-2}, we apply the algorithm in~\cite{randomapproachhan} to numerically evaluate $C^{(1)}(\mathcal{S}_0, \eps)$, whose convergence is guaranteed by the above-mentioned concavity result. Finally, in Section~\ref{sub-3}, we characterize the full asymptotics of $C^{(1)}(\mathcal{S}_0, \eps)$ around $\eps=0$.

\subsection{Concavity}  \label{sub-1}

The concavity of the mutual information rate of special families of finite-state machine channels (FSMCs) has been considered in~\cite{hm09b} and~\cite{lihan2013}. The results therein actually imply that the concavity of $I(X; Y)$ with respect to some parameterization of the Markov chain $X$ when $\eps$ is small enough. In this section, however, we will show that $I(X; Y)$ is concave with respect to $\theta$, irrespective of the values of $\eps$. Below is the main theorem of this section.
\begin{theorem}\label{concavitybec}
For all $\eps\in [0,1)$, $I(X;Y)$ is strictly concave with respect to $\theta$, $0 \leq \theta \leq 1$.
\end{theorem}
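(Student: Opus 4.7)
The approach is to invoke Corollary~\ref{memorylessec} with $m=1$, which expresses
\[
I(X;Y) = (1-\eps)^{2}\sum_{k=0}^{\infty}\eps^{k}\,H(X_{0}\mid X_{-k-1}).
\]
On $\eps\in[0,1)$ the weights $(1-\eps)^{2}\eps^{k}$ are nonnegative and the $k=0$ weight is strictly positive, so strict concavity of $I(X;Y)$ in $\theta$ reduces to showing each $H(X_{0}\mid X_{-k-1})$ is concave in $\theta$ on $[0,1]$ with strict concavity for at least the $k=0$ term.

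To handle every term at once I would use the decomposition
\[
H(X_{0}\mid X_{-n}) = H(X_{0}) - I(X_{0};X_{-n}),
\]
reducing the task to (a) strict concavity of $H(X_{0})$ in $\theta$ and (b) convexity of $I(X_{0};X_{-n})$ in $\theta$ for every $n\geq 1$; together (a) and (b) give strict concavity of each $H(X_{0}\mid X_{-n})$. For (a), the stationary distribution of our Markov chain is $\pi(\theta)=(1/(1+\theta),\theta/(1+\theta))$, so $H(X_{0})=H(\theta/(1+\theta))$, the composition of the strictly concave, increasing binary entropy $H$ on $[0,1/2]$ with the concave map $\theta\mapsto\theta/(1+\theta)$ whose range lies in $[0,1/2]$; this composition is strictly concave.

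For the base case $n=1$ of (b), the joint distribution of $(X_{-1},X_{0})$ together with the forced transition $X_{-1}=2\Rightarrow X_{0}=1$ yields the closed form
\[
I(X_{0};X_{-1}) = \log(1+\theta) + \frac{(1-\theta)\log(1-\theta)}{1+\theta},
\]
whose second derivative simplifies to $\bigl[\,2(1+\theta)/(1-\theta) + 4\log(1-\theta)\,\bigr]/(1+\theta)^{3}$. The numerator equals $2$ at $\theta=0$ and has derivative $4\theta/(1-\theta)^{2}\geq 0$ on $[0,1)$, so it is strictly positive throughout $[0,1)$; hence $I(X_{0};X_{-1})$ is strictly convex.

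The main obstacle is establishing (b) for $n\geq 2$. Using the spectral decomposition of $\Pi$ (eigenvalues $1$ and $-\theta$), one has the clean formulas $p^{(n)}_{11}(\theta)=(1+\theta(-\theta)^{n})/(1+\theta)$ and $p^{(n)}_{21}(\theta)=(1-(-\theta)^{n})/(1+\theta)$, so the joint distribution of $(X_{-n},X_{0})$ becomes a rational function of $\theta$ in which $(-\theta)^{n}$ appears linearly. My intended route is an induction on $n$ exploiting the recursion $p^{(n)}_{21}=p^{(n-1)}_{11}$, handling even and odd $n$ separately so that the alternating sign of $(-\theta)^{n}$ is managed cleanly; I expect the resulting second-derivative inequality to reduce, after algebra, to a sign check of the same flavor as in the $n=1$ case.
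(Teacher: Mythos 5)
Your opening step matches the paper exactly: both reduce $I(X;Y)=(1-\eps)^2\sum_{k\ge 0}\eps^k H(X_0|X_{-k-1})$ to showing each $H(X_0|X_{-n})$ is (strictly) concave in $\theta$. Your decomposition $H(X_0|X_{-n})=H(X_0)-I(X_0;X_{-n})$ is a genuinely different route from the paper's, and the two pieces you actually carry out are correct: $H(X_0)=H(\theta/(1+\theta))$ is strictly concave as the composition of an increasing strictly concave function with a concave map into $[0,1/2]$, and your closed form for $I(X_0;X_{-1})$ and its second derivative $\bigl(2(1+\theta)/(1-\theta)+4\log(1-\theta)\bigr)/(1+\theta)^3>0$ both check out.

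The genuine gap is the case $n\geq 2$, which is not a corner case but the overwhelming bulk of the argument. You offer only an ``intended route'' — an induction on $n$ via $p^{(n)}_{21}=p^{(n-1)}_{11}$ — but this recursion relates transition probabilities, not the highly nonlinear second derivatives of $I(X_0;X_{-n})$; there is no apparent mechanism by which convexity of $I(X_0;X_{-n-1})$ could be inherited from that of $I(X_0;X_{-n})$. The paper does not use induction: it works directly with $H(X_0|X_{-n})=f(\theta)H(g_{n+1}(\theta))+(1-f(\theta))H(g_n(\theta))$, isolates the manifestly negative term $fH''(g_{n+1})(g'_{n+1})^2+(1-f)H''(g_n)(g'_n)^2$, applies the mean value theorem and the bound $\log z\le z-1$ to the remaining terms, splits into even $n$ (further split on $g_n\lessgtr 1/2$) and odd $n\ge 3$, and ultimately reduces to the positivity of a polynomial $h_1(n,\theta)$. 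Even that final ``sign check'' is not elementary: $h_1(n,\theta)\ge 0$ is verified analytically only for $n\ge 65$ (via locating a critical point $\theta_0\ge 1/2$ and substituting $\theta_0^{n-2}$), and for $3\le n\le 65$ it is left to direct computation. Your expectation that the $n\ge 2$ cases will ``reduce, after algebra, to a sign check of the same flavor as the $n=1$ case'' is therefore overly optimistic, and without that step the theorem is not proved. The claim that $I(X_0;X_{-n})$ is convex for all $n\ge 2$ also remains unverified — it is plausible (and numerically it appears true for $n=2$), but it is not free, and your decomposition does not visibly make it easier than the paper's direct attack on $H''(X_0|X_{-n})$.
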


\begin{proof}
From Corollary~\ref{memorylessec}, it follows that to prove the theorem, it suffices to show that for any $n \geq 1$, $H(X_{0}|X_{-n})$ is strictly concave with respect to $\theta$, $0 \leq \theta \leq 1$. To prove this, we will deal with the following several cases:

{\bf \underline{Case 1: $n=1$.}} Straightforward computations give
$$
H(X_{0}|X_{-1})=\frac{-\theta\log \theta-(1-\theta)\log (1-\theta)}{1+\theta}
$$
and
$$
\hspace{-3mm} H^{\prime\prime}(X_{0}|X_{-1})=\frac{1}{(1+\theta)^3}\left\{2\log \theta-4\log(1-\theta)-\frac{1}{\theta}-\frac{4}{1-\theta}+1\right\}.
$$
\hspace{-1.9mm} One checks that the function within the brace is negative and it takes the maximum at $\theta={1}/{2}$. Therefore $H(X_{0}|X_{-1})$ is strictly concave in $\theta$.

{\bf \underline{{Case 2}: $n\ge 2$.}} By definition,
\begin{eqnarray*}H(X_{0}|X_{-n})&=&P(X_{-n}=1)H(X_{0}|X_{-n}=1)+P(X_{-n}=2)H(X_{0}|X_{-n}=2).
\end{eqnarray*}
The following facts can be verified easily:
\begin{itemize}
\item[(i)] $f(\theta) \triangleq P(X_{-n}=1)=1-P(X_{-n}=2)=\frac{1}{1+\theta}$;
\item[(ii)]
the $n$-step transition probability matrix of the Markov chain $\{X_n\}$ is
 \begin{align*}
\Pi^{n}
&=\begin{bmatrix}
g_{n+1}(\theta)&1-g_{n+1}(\theta)\\
g_{n}(\theta)&1-g_{n}(\theta)
\end{bmatrix},
\end{align*}
where $$g_{n}(\theta) \triangleq \frac{1-(-\theta)^{n}}{1+\theta};$$

\item[(iii)] $H(y)=-y\log y-(1-y)\log(1-y)$ is strictly concave with respect to $y$ for $y\in(0,1)$.
\end{itemize}
With the above notation, we have
$$
H(X_{0}|X_{-n})=f(\theta)H(g_{n+1}(\theta))+(1-f(\theta))H(g_{n}(\theta))
$$
and
\begin{align}
H^{\prime\prime}(X_{0}|X_{-n})&=fH^{\prime\prime}(g_{n+1})(g_{n+1}^{\prime})^2+(1-f)H^{\prime\prime}(g_{n})(g_{n}^{\prime})^2\label{term1}\\
&{}\hspace{4.5mm}+f^{\prime\prime} (H(g_{n+1})-H(g_{n}))\label{term2}\\
&{}\hspace{4.5mm}-2f^{\prime}H^{\prime}(g_{n})g_{n}^{\prime}+(1-f)H^{\prime}(g_{n})g_{n}^{\prime\prime}\label{term4}\\
&{}\hspace{4.5mm}+2f^{\prime}H^{\prime}(g_{n+1})g_{n+1}^{\prime}+fH^{\prime}(g_{n+1})g_{n+1}^{\prime\prime}\label{term3},
\end{align}
where $f=f(\theta)$ and $g_{n}=g_{n}(\theta)$.
It follows from (i) and (iii) that the term (\ref{term1}) is strictly negative. So, to prove the theorem, it suffices to show that $T \triangleq (\ref{term2})+ (\ref{term4})+(\ref{term3})\le 0$.

By the mean value theorem,
\vspace{-2mm}
\begin{eqnarray*}
(\ref{term2})&=&\frac{2}{(1+\theta)^{3}}(g_{n+1}-g_{n})\log\frac{1-z_1}{z_1}\\
&=&\frac{2(-\theta)^{n}(1+\theta)}{(1+\theta)^{4}}\log\frac{1-z_1}{z_1},
\end{eqnarray*}
where $z_1$ lies between $g_{n}$ and $g_{n+1}$. As a function of $z_1$, $\frac{2(-\theta)^{n}(1+\theta)}{(1+\theta)^{4}}\log\frac{1-z_1}{z_1}$ takes the maximum at $g_{n}$. It then follows that
\begin{eqnarray*}
\hspace{-1mm} T&\le&\frac{c_{n}\{2\theta-2+(-\theta)^{n-1}[(n^2-3n)\theta^2+2(n^2-n-2)+n^2+n]\}}{(1+\theta)^{4}}\\
&{}&+\frac{c_{n+1}\{4-(-\theta)^{n-1}[(n^2-3n)\theta^2+2(n^2-n-2)+n^2+n]\}}{(1+\theta)^{4}}\\
&=&\frac{c_{n}[2\theta-2+Q(n,\theta)(-\theta)^{n-1}]}{(1+\theta)^{4}}+\frac{c_{n+1}[4-Q(n,\theta)(-\theta)^{n-1}]}{(1+\theta)^{4}},
\end{eqnarray*}
\noindent where $$Q(n,\theta)=(n^2-3n)\theta^2+2(n^2-n-2)\theta+n^2+n$$ and $$c_{n}=\log\frac{1-g_n}{g_n}.$$
We then consider the following several cases:
\begin{item}

{\bf \underline{Case 2.1: $n$ is a positive even number.}}
We first consider the case that $g_{n}\le\frac{1}{2}$. For this case, obviously we have $c_{n}\ge 0$, $g_{n+1}>\frac{1}{2}$ and $Q(n,\theta)>0$, which further implies that
\begin{align*}
{T}\le&\frac{c_{n}[2\theta-2-Q(n,\theta)\theta^{n-1}]}{(1+\theta)^{4}}+\frac{c_{n+1}[4+Q(n,\theta)\theta^{n-1}]}{(1+\theta)^{4}}<0.
\end{align*}
Now, for the case that $g_{n}> \frac{1}{2}$, again obviously we have $c_{n}<0$ and furthermore,
$$
c_{n+1}-c_{n}=\log\frac{1-g_{n+1}}{g_{n+1}}-\log\frac{1-g_{n}}{g_{n}}\le 0,
$$
where we have used the fact that $g_n \leq g_{n+1}$ for a positive even number $n$. Now, we are ready to deduce that
\begin{align*}
T&\le\frac{c_{n}[2\theta-2-Q(n,\theta)\theta^{n-1}]}{(1+\theta)^{4}}+\frac{c_{n+1}[4+Q(n,\theta)\theta^{n-1}]}{(1+\theta)^{4}}\\
&=\frac{(c_{n+1}-c_{n})(4+Q(n,\theta)\theta^{n-1})}{(1+\theta)^{4}}+\frac{c_{n}(2\theta+2)}{(1+\theta)^{4}}\\
&< 0.
\end{align*}
{\bf \underline{Case 2.2: $n$ is a positive odd integer and $n\ge 3$.}}
In this case, we have
\begin{align*}
T&\le\frac{c_{n}[2\theta-2+Q(n,\theta)\theta^{n-1}]}{(1+\theta)^{4}}+\frac{c_{n+1}[4-Q(n,\theta)\theta^{n-1}]}{(1+\theta)^{4}}\\
&=\frac{(c_{n+1}-c_{n})[4-Q(n,\theta)\theta^{n-1}]}{(1+\theta)^{4}}+\frac{c_{n}(2\theta+2)}{(1+\theta)^{4}}\\
&\le\frac{1}{(1+\theta)^{4}}\left\{\frac{[4-Q(n,\theta)\theta^{n-1}]\theta^{n}}{(1-y_{2})y_{2}}+(2\theta+2)(1/g_{n}-2)\right\},
\end{align*}
where the last inequality follows from the mean value theorem, the inequality $\log z \le z-1$ for $z>0$ and the fact that $z_{2}$ lies between $g_{n}$ and $g_{n+1}$.
\end{item}

Let $y_{2}=B_n/(1+\theta)$ and \begin{small}
$C_{n}=1+\theta-B_{n}$\end{small}, where
$B_{n}\in [1-\theta^{n+1},1+\theta^{n}]$. Then
\begin{eqnarray*}
T&\le&\frac{1}{(1+\theta)^{4}}\left\{\frac{[4-Q(n,\theta)\theta^{n-1}]\theta^{n}}{(1-g_{2})g_{2}}+(2\theta+2)(1/g_{n}-2)\right\}\\
&\le&\frac{(2\theta-2-4\theta^n)B_{n}C_{n}+(1+\theta)(1+\theta^n)\theta^n(4-Q(n,\theta)\theta^{n-1})}{B_{n}C_{n}(1+\theta)^{3}(1+\theta^n)}.
\vspace{-3mm}
\end{eqnarray*}

Note that the above numerator, as a function of $B_n$,
takes the maximum at $B_n=1+\theta^n$. Denote this maximum by $-2\theta(1+\theta^n)h_{1}(n,\theta)$, where
\begin{eqnarray*}
h_{1}(n,\theta)&=&1-\theta-3\theta^{n-1}+\theta^n+\frac{Q(n,\theta)}{2}\theta^{2n-2}
+\frac{Q(n,\theta)-4}{2}\theta^{2n-1}.
\end{eqnarray*}

To complete the proof, it suffices to prove $h_1(n,\theta)\ge 0$. Substituting $Q(n,\theta)$ into $h_{1}(n,\theta)$, we have
\begin{eqnarray*}
h_{1}(n,\theta)&=& 1-\theta-3\theta^{n-1}+\theta^n+\frac{Q(n,\theta)}{2}\theta^{2n-2}
+\frac{Q(n,\theta)-4}{2}\theta^{2n-1}\\
&=&1-\theta-3\theta^{n-1}+\theta^n+\frac{\theta^{2n-2}}{2}[(n^2-3n)\theta^3\\
&{}&+ (3n^2-5n-4)\theta^2+(3n^2-n-8)\theta+n^2+n]\\
&\ge&h(n,\theta),
\end{eqnarray*}
where
$$h(n,\theta)=1-\theta-3\theta^{n-1}+\theta^n+(4n^2-4n-6)\theta^{2n+1}.$$

The following facts can be easily verified:
\begin{itemize}
\item[(a)] $h(n,\theta)$ takes the minimum at some $\theta_0$, where $\theta_0$ satisfies the following equation
{\begin{equation}\hspace{-6mm}
h^{\prime}(n,\theta)=0;\label{theta0}
\end{equation}}
\item[(b)] $h^{\prime}(n,\theta)<0$ for $\theta\in [0,1/2]$ and $n\ge 11$.
\end{itemize}
It then follows from (a) and (b) that $\theta_{0}\ge 1/2$ for $n>11$.
Solving~(\ref{theta0}) in $\theta^{n-2}$, we have
$$
\hspace{0.081cm} \theta_{0}^{n-2}=\frac{(3-\theta_0)n-3+\sqrt{((3-\theta_0)n-3)^2+4(2n+1)(4n^2-4n-6)\theta_{0}^{5}}}{2(2n+1)(4n^2-4n-6)\theta_{0}^{5}}.
$$

For $n\ge 11$, substituting $\theta_{0}^{n-2}$ into $h(n,\theta)$, we have
\begin{align*}
h_{1}(n,\theta)&\ge h(n,\theta)\\
&\ge h(n,\theta_0)\\
&=1-\theta_{0}-3\theta^{n-1}_{0}+\theta_{0}^{n}+(4n^2-4n-6)\theta_{0}^{2n+1}\\
&\ge \theta_{0}^{n-1}[-3+\theta_0+(4n^2-4n-6)\theta_{0}^{4}\cdot \theta_{0}^{n-2}]\\
&\ge\theta_{0}^{n-1} v(n),
\end{align*}
where
\begin{align*}v(n)=-\frac{5}{2}+\frac{2n-3+\sqrt{(2n-3)^2+(2n+1)(4n^2-4n-6)2^{-3}}}{2(2n+1)}.
\end{align*}
One checks that $v(n)>0$ for $n\ge 65$. Now, with the fact that $h_{1}(n,\theta)>0$ for $3\le n\le 65$ (this can be verified via tedious yet straightforward computations since $h_{1}(n,\theta)$ is an elementary function), we conclude that for $h_{1}(n,\theta)\ge 0$ for all $n\ge 3$ and $\theta\in [0,1]$.
\end{proof}

\subsection{Numerical evaluation of $C^{(1)}(\mathcal{S}_0, \eps)$} \label{sub-2}

When $\eps=0$, that is, when the channel is ``perfect'' with no erasures, both $C(\mathcal{S}, \eps)$ and $C^{(m)}(\mathcal{S}, \eps)$ boil down to the noiseless capacity of the constraint $\mathcal{S}$, which can be explicitly computed~\cite{parryintrinsicmarkovchains}; however, little progress has been made for the case when $\eps > 0$ due to the lack of simple and explicit characterization for $C(\mathcal{S},\eps)$ and $ C^{(m)}(\mathcal{S},\eps)$. In terms of numerically computing $C(\mathcal{S}, \eps)$ and $C^{(m)}(\mathcal{S}, \eps)$, relevant work can be found in the subject of FSMCs, as input-constrained memoryless erasure channels can be regarded as special cases of FSMCs. Unfortunately, the capacity of an FSMC is still largely unknown and the fact that our channel is only a special FSMC does not seem to make the problem easier.

Recently, Vontobel {\em et al.}~\cite{vontobel} proposed a generalized Blahut-Arimoto algoritm (GBAA) to compute the capacity of an FSMC; and in~\cite{randomapproachhan}, Han also proposed a randomized algorithm to compute the capacity of an FSMC. For both algorithms, the concavity of the mutual information rate is a desired property for the convergence (the convergence of the GBAA requires, in addition, the concavity of certain conditional entropy rate). On the other hand, as elaborated in~\cite{lihan2013}, such a desired property, albeit established for a few special cases~\cite{hm09b, lihan2013}, is not true in general.

The concavity established in the previous section allows us to numerically compute $C^{(1)}(\mathcal{S}_0, \eps)$ using the algorithm in~\cite{randomapproachhan}. The randomized algorithm proposed in~\cite{randomapproachhan} iteratively compute $\{\theta_{n}\}$ in the following way:
$$\theta_{n+1}=\begin{cases}
\theta_{n},&\mbox{if}\ \theta_{n}+a_{n}g_{n^b}(\theta_{n})\in [0,1],\\
\theta_{n}+a_{n}g_{n^b}(\theta_{n}), &\mbox{otherwise},
\end{cases}
$$
where $g_{n^{b}}(\theta_{n})$ is a simulator for $I^{\prime}(X;Y)$ (for details, see~\cite{randomapproachhan}). The author shows that $\{\theta_{n}\}$ converges to the first-order capacity-achieving distribution if $I(X;Y)$ is concave with respect to $\theta$, which has been proven in Theorem~\ref{concavitybec}. Therefore, with proven convergence, this algorithm can be used to compute the first-order capacity-achieving distribution $\theta(\eps)$ and the first-order capacity $C^{(1)}(\mathcal{S}_0, \eps)$ (in bits), which are shown in Fig.~\ref{capacity-achievingdistribution} and Fig.~\ref{capacity1}, respectively.
\begin{figure}[!t]
\begin{center}
\includegraphics[width=2.5in]{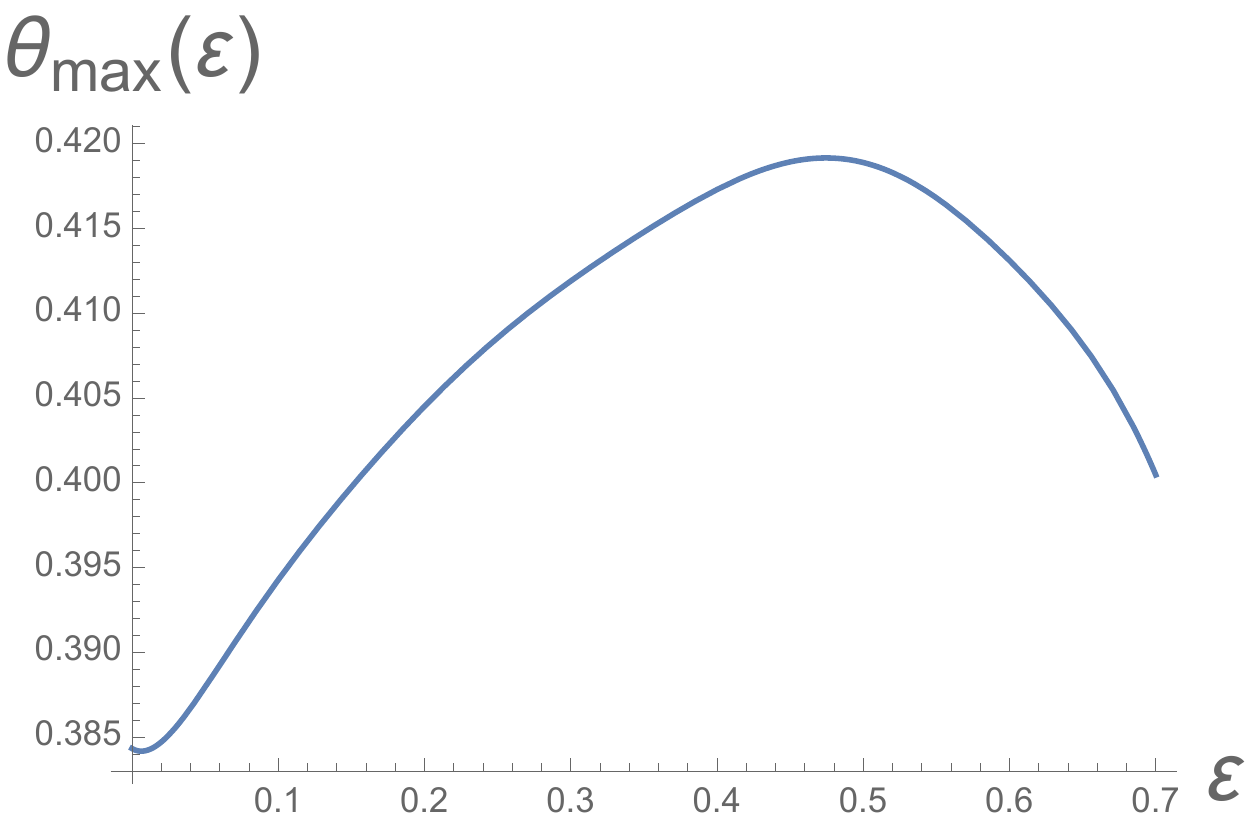}
\caption{Capacity-achieving Distribution}
\label{capacity-achievingdistribution}
\vspace{5mm}
\includegraphics[width=2.5in]{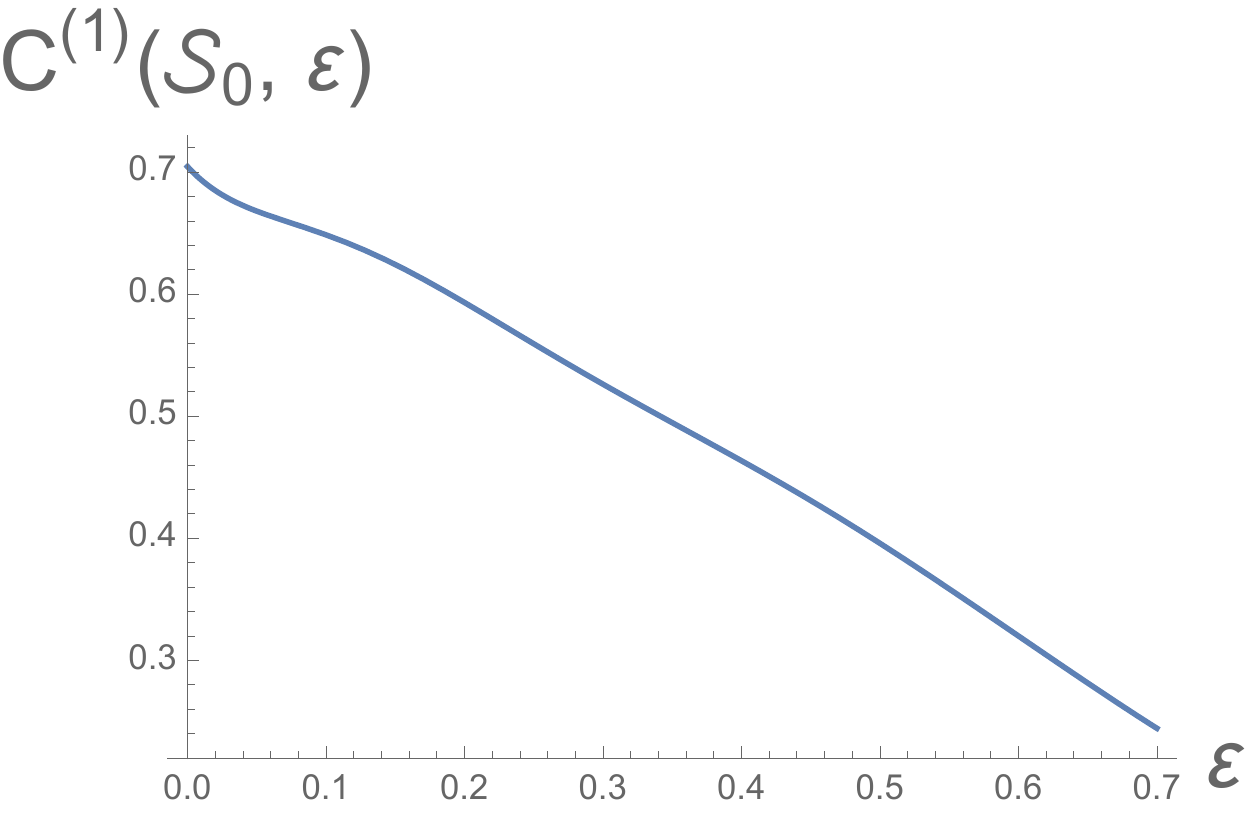}
\caption{Capacity}
\label{capacity1}
\end{center}
\end{figure}

\subsection{Full Asymptotics} \label{sub-3}

As in Section~\ref{sub-1}, the noiseless capacity of $(1,\infty)$-RLL constraint $\mathcal{S}_0$ is achieved by the first-order Markov chain with the transition probability matrix~(\ref{achievingmatrix}). So, we have
$$
\theta_{\mbox{\small max}}(\eps)=\argmax_{\theta} I(X;Y)=1/\lambda^2,
$$
where $\lambda=(1+\sqrt{5})/2$. In this section, we give a full asymptotic formula for $\theta_{max}(\eps)$ around $\eps=0$, which further leads to a full asymptotic formula for $C^{(1)}(\mathcal{S}_0, \eps)$ around $\eps=0$.

The following theorem gives the Taylor series of $\theta_{\mbox{\small max}}$ in $\eps$ around $\eps=0$, which leads to an explicit formula for the $n$-th derivative of $C^{(1)}(\mathcal{S}_0, \eps)$ at $\eps=0$, whose coefficients can be explicitly computed.

\begin{theorem}\label{foc}
a) $\theta_{\mbox{\small max}}(\eps)$ is analytic in $\eps$ for $\eps\in [0,1)$ and
\begin{align}
\theta_{\mbox{\small max}}^{(n)}(0)&=-\left(\frac{\mathrm{d}^2H(X_0|X_{-1})}{\mathrm{d}\theta^2}\left(\frac{1}{\lambda^2}\right)\right)^{-1}\nonumber\\
&\hspace{-1cm} \quad\times\left\{\sum_{k=1}^{n}{n\choose k}k!\sum_{m_1,m_2,\cdots,m_{n-k}}a(m_1,\cdots,m_{n-k})\right. \quad\frac{\mathrm{d}^{m_1+\cdots+m_{n-k}+1}H(X_0|X_{-k-1})}{\mathrm{d}\theta^{m_1+\cdots+m_{n-k}+1}}\left(\frac{1}{\lambda^2}\right)\prod_{j=1}^{n-k}(\theta_{\mbox{\small max}}^{(j)}(0))^{m_j}\nonumber\\
&\quad+\left.\sum_{m_1,m_2,\cdots,m_{n}=0}a(m_1,\cdots,m_{n})\frac{\mathrm{d}^{m_1+\cdots+m_{n}+1}H(X_0|X_{-1})}{\mathrm{d}\theta^{m_1+\cdots+m_{n}+1}}\left(\frac{1}{\lambda^2}\right)\prod_{j=1}^{n}(\theta_{\mbox{\small max}}^{(j)}(0))^{m_j}\right\}, \label{theta-max}
\end{align}
where $\sum\limits_{m_1,m_2,\cdots,m_{n-k}}$ is taken over all nonnegative intergers $m_1,\cdots,m_{n-k}$ satisfying the constraint
$$m_1+2m_2+\cdots+(n-k)m_{n-k}=n-k$$
and
$$a(m_1,\cdots,m_{n-k})=\frac{(n-k)!}{m_1!1^{m_1}\cdots m_{n-k}!(n-k)^{m_{n-k}}}.$$

b) $C^{(1)}(\mathcal{S}_0, \eps)$ is analytic in $\eps$ for $\eps\in [0,1)$ with the following Taylor series expansion around $\eps=0$:
{\small \begin{equation}  \label{first-order-full}
\hspace{-1.5cm} C^{(1)}(\mathcal{S}_0, \eps) = \sum_{n=0}^{\infty} \left(\left.\frac{\mathrm{d}^nG_0(\eps)}{\mathrm{d}\eps^n}\right|_{\eps=0}+\left.\frac{d^{n-1}(G_1(\eps)-2G_0(\eps))}{\mathrm{d}\eps^{n-1}}\right|_{\eps=0}\\
+\left.\sum_{k=2}^{n}{n\choose k}\frac{\mathrm{d}^{n-k}}{\mathrm{d}\eps^{n-k}}\left\{(G_k(\eps)+G_{k-2}(\eps)-2G_{k-1}(\eps))\right\}\right|_{\eps=0} \right) \eps^n,
\end{equation}}
where $G_k(\eps)=H(X_0|X_{-k-1})(\theta_{\mbox{\small max}}(\eps))$.
\end{theorem}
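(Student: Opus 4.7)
The plan is to combine Corollary~\ref{memorylessec}, which gives the convergent series
\[
I(X;Y) = (1-\eps)^2 \sum_{k=0}^{\infty} H(X_0|X_{-k-1})(\theta)\,\eps^k,
\]
with the strict concavity from Theorem~\ref{concavitybec}, then apply the analytic implicit function theorem and read off Taylor coefficients via the Fa\`a di Bruno formula. First I would verify that $I$, viewed as a function $I(\theta, \eps)$, is jointly real-analytic on $(0,1) \times [0,1)$: each coefficient $H(X_0|X_{-k-1})(\theta)$ is built from the rational function $g_k(\theta)=(1-(-\theta)^k)/(1+\theta)$ (displayed in the proof of Theorem~\ref{concavitybec}) composed with the binary entropy, hence analytic in $\theta \in (0,1)$; uniform boundedness by $\log 2$ makes the series locally uniformly convergent in $\eps$, so termwise differentiation in either variable is justified.

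Next, since at $\eps=0$ the function $I(\theta,0) = H(X_0|X_{-1})(\theta)$ is uniquely maximized at $\theta = 1/\lambda^2$ with strictly negative second derivative there (Case 1 of the proof of Theorem~\ref{concavitybec}), the analytic implicit function theorem applied to $F(\theta,\eps):=\partial I/\partial \theta$ produces an analytic solution $\theta_{\max}(\eps)$ of $F=0$ near $\eps=0$. Strict concavity of $I$ in $\theta$ (Theorem~\ref{concavitybec}) identifies this local solution as the unique maximizer for each $\eps$, and the blow-up of the derivative of the binary entropy at $\theta\in\{0,1\}$ keeps the optimum uniformly interior; chaining the local analyticities along the curve $\eps\mapsto\theta_{\max}(\eps)$ then extends analyticity to all of $[0,1)$.

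For part (a), I would differentiate the identity $F(\theta_{\max}(\eps),\eps) \equiv 0$ in $\eps$ repeatedly and evaluate at $\eps=0$. The $n$-th derivative of the composite $H(X_0|X_{-k-1})(\theta_{\max}(\eps))$ is given by Fa\`a di Bruno's formula, which produces exactly the Bell-polynomial weights $a(m_1,\ldots,m_{n-k}) = (n-k)!/\prod_j (m_j!\, j^{m_j})$ appearing in the statement; combining this with the Leibniz expansion of the $\eps^k$ factor and isolating the only term containing $\theta_{\max}^{(n)}(0)$---which arises from the $k=0$ series term through the second derivative of $H(X_0|X_{-1})$---yields the stated recursion. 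For part (b), writing
\[
(1-\eps)^2 \sum_{k\ge 0} G_k(\eps)\,\eps^k = G_0(\eps) + (G_1(\eps)-2G_0(\eps))\,\eps + \sum_{k\ge 2}\bigl(G_k(\eps) - 2G_{k-1}(\eps) + G_{k-2}(\eps)\bigr)\,\eps^k
\]
and applying the generalized Leibniz rule to each factor $f(\eps)\,\eps^k$ to extract the Taylor coefficient at $\eps=0$ gives the displayed expansion; analyticity of every $G_k$ on $[0,1)$ is inherited from part (a).

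The main obstacle I anticipate is two-fold. First, the extension of local analyticity of $\theta_{\max}$ from a small neighborhood of $\eps=0$ to the entire interval $[0,1)$ requires both uniform interiority of the maximizer and a uniform lower bound on $|\partial^2 I/\partial\theta^2|$ along the critical curve; both follow from Theorem~\ref{concavitybec} combined with the boundary blow-up of the entropy derivative, but the argument must be made carefully. Second, the combinatorial bookkeeping in part (a) is delicate: one must separate the single Fa\`a di Bruno term carrying $\theta_{\max}^{(n)}(0)$ linearly (arising uniquely from the partition $m_1=\cdots=m_{n-1}=0$, $m_n$ absent, of the $k=0$ summand) from all lower-order terms, in order to solve the implicit relation explicitly at each order.
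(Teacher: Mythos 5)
Your proposal follows essentially the same route as the paper: use the series from Corollary~\ref{memorylessec}, invoke the strict concavity of Theorem~\ref{concavitybec} to get a unique interior critical point, apply the analytic implicit function theorem to $\partial I/\partial\theta=0$, and then extract Taylor coefficients of $\theta_{\max}$ and $C^{(1)}$ via Fa\`a di Bruno and Leibniz, with the same telescoping decomposition $(1-\eps)^2\sum G_k\eps^k=G_0+(G_1-2G_0)\eps+\sum_{k\ge2}(G_k-2G_{k-1}+G_{k-2})\eps^k$ for part (b). One small slip: the unique Fa\`a di Bruno term carrying $\theta_{\max}^{(n)}(0)$ linearly in the $k=0$ summand corresponds to the partition $m_n=1$, $m_1=\cdots=m_{n-1}=0$ (not ``$m_n$ absent''), which pairs the second derivative $\frac{\mathrm{d}^2H(X_0|X_{-1})}{\mathrm{d}\theta^2}(1/\lambda^2)$ with $\theta_{\max}^{(n)}(0)$ and allows solving the implicit relation at order $n$.
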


\begin{proof}
a) For $\eps>0$,
$$
I(X;Y)=\begin{cases}0&\theta=0\ \mbox{or}\ 1,\\>0&\theta\in(0,1).\end{cases}
$$
With Theorem~\ref{concavitybec} establishing the concavity of $I(X; Y)$, $\theta_{\mbox{\small max}}$ should be the unique zero point of the derivative of the mutual information rate. So, $\theta_{\mbox{\small max}}(\eps)\in (0,1)$ and satisfies
\begin{equation}\label{dvanish}
0=\frac{\mathrm{d} I(X;Y)}{\mathrm{d}\theta}=(1-\eps)^2\sum_{k=0}^{\infty}\frac{\mathrm{d}H(X_0|X_{-k-1})}{\mathrm{d}\theta}\eps^{k}.
\end{equation}
According to the analytic implicit function theorem~\cite{stevenimplicitfunctiontheorem}, $\theta_{\mbox{\small max}}(\eps)$ is analytic in $\eps$ for $\eps\in[0,1)$. In the following the $s$-th order derivative of $\theta_{\mbox{\small max}}(\eps)$ at $\eps=0$ is computed. It follows from the Leibniz formula and the Faa di Bruno formula~\cite{faadibruno} that
\begin{align}\label{lfformula}
0&=\left.\sum_{k=0}^{\infty}\frac{\mathrm{d}^n}{\mathrm{d}\eps^n}\left\{\frac{\mathrm{d}H(X_0|X_{-k-1})}{\mathrm{d}\theta}\eps^{k}\right\}\right|_{\eps=0}\nonumber\\
&=\left.\sum_{k=0}^{n}k!{n\choose k}\frac{\mathrm{d}^{(n-k)}}{\mathrm{d}\eps^{n-k}}\left\{\frac{\mathrm{d}H(X_0|X_{-k-1})}{\mathrm{d}\theta}\right\}\right|_{\eps=0}\nonumber\\
&=\sum_{k=0}^{n}k!{n\choose k}\sum_{m_1,m_2,\cdots,m_{n-k}}a(m_1,\cdots,m_{n-k})\frac{\mathrm{d}^{m_1+\cdots+m_{n-k}+1}H(X_0|X_{-k-1})}{\mathrm{d}\theta^{m_1+\cdots+m_{n-k}+1}}\left(\frac{1}{\lambda^2}\right)\prod_{j=1}^{n-k}(\theta_{\mbox{\small max}}^{(j)}(0))^{m_j}
\end{align}
which immediately implies a).

b) Note that
\begin{align*}
C^{(1)}(\mathcal{S}_0, \eps)&=(1-\eps)^{2}\sum_{k=0}^{\infty}G_{k}(\eps)\eps^{k}\\
&=G_0(\eps)+(G_1(\eps)-2G_0(\eps))\eps+\sum_{k=2}^{\infty}(G_k(\eps)+G_{k-2}(\eps)-2G_{k-1}(\eps))\eps^{k}.
\end{align*}
It then follows from the Leibniz formula that
\begin{align*}
{}&\hspace{-1cm} \frac{\mathrm{d}^n}{\mathrm{d}\eps^{n}}\left\{\sum_{k=2}^{\infty}(G_k(\eps)+G_{k-2}(\eps)-2G_{k-1}(\eps))\eps^{k}\right\}\\
&=\sum_{k=2}^{\infty}\sum_{t=0}^{n}{n\choose t}\frac{\mathrm{d}^{n-t}}{\mathrm{d}\eps^{n-t}}\left\{(G_k(\eps)+G_{k-2}(\eps)-2G_{k-1}(\eps))\right\}\eps^{k-t}.
\end{align*}
Therefore,
\begin{align*}
\left.\frac{\mathrm{d}^{n}C^{(1)}(\mathcal{S}_0, \eps)}{\mathrm{d}\eps^s}\right|_{\eps=0}&=\left.\frac{\mathrm{d}^nG_0(\eps)}{d\eps^n}\right|_{\eps=0}+\left.\frac{\mathrm{d}^{n-1}(G_1(\eps)-2G_0(\eps))}{\mathrm{d}\eps^{n-1}}\right|_{\eps=0}\\
&\hspace{4.5mm}+\left.\sum_{k=2}^{\infty}\sum_{t=0}^{n}{n\choose t}\frac{\mathrm{d}^{n-t}}{\mathrm{d}\eps^{n-t}}\left\{(G_k(\eps)+G_{k-2}(\eps)-2G_{k-1}(\eps))\right\}\eps^{k-t}\right|_{\eps=0}\\
&=\left.\frac{\mathrm{d}^nG_0(\eps)}{\mathrm{d}\eps^n}\right|_{\eps=0}+\left.\frac{\mathrm{d}^{n-1}(G_1(\eps)-2G_0(\eps))}{\mathrm{d}\eps^{n-1}}\right|_{\eps=0}\\
&\hspace{4.5mm}+\left.\sum_{k=2}^{n}{n\choose k}\frac{\mathrm{d}^{n-k}}{\mathrm{d}\eps^{n-k}}\left\{(G_k(\eps)+G_{k-2}(\eps)-2G_{k-1}(\eps))\right\}\right|_{\eps=0},
\end{align*}
which immediately implies b).
\end{proof}

Despite their convoluted looks, (\ref{theta-max}) and (\ref{first-order-full}) are explicit and computable. Below, we list the  coefficients of $C^{(1)}(\mathcal{S}_0, \eps)$ (in bits) and $\theta_{\mbox{\small max}}(\eps)$ up to the third order, which are numerically computed according to (\ref{theta-max}) and (\ref{first-order-full}) and rounded off to the ten thousandths decimal digit:

\begin{table}[htbp]
\renewcommand{\arraystretch}{1.3}
\caption{}
\centering
\vspace{0.3cm}
\begin{tabular}{|c|c|c|c|c|}
\hline
&$\eps^0$ &$\eps^1$& $\eps^2$&$\eps^3$\\
\hline
$\theta_{\mbox{max}}(\eps)$&0.3820& 0.0462&0.1586&0.2455\\
\hline
$C_{1}(\eps)$&0.6942&-0.6322& 0.0159& -0.0625\\
\hline
\end{tabular}
\end{table}
\section{Feedback with Input-Constraint} \label{feedback-section}

In this section, we consider the input-constrained erasure channel (\ref{mec}) as in Section~\ref{introduction-section} however with possible feedback, and we are interested in comparing its feedback capacity $C_{FB}(\mathcal{S}, \eps)$ and its non-feedback capacity $C(\mathcal{S}, \eps)$. The following theorem states that for the erasure channel without any input-constraint, feedback does not increase the capacity and both of them can be computed explicitly. This result is in fact implied by Theorem $12$ in~\cite{vwerasure}, where a random coding argument has been employed in the proof; we nonetheless give an alternative proof in Appendix~\ref{FB-NFB} for completeness.
\begin{theorem}\label{fbnot}
For the erasure channel~(\ref{mec}) without any input constraints, feedback does not increase the capacity, and we have
$$
C_{FB}(\mathcal{X}^*, \eps)=C(\mathcal{X}^*, \eps)=(1-\eps)\log K.
$$
\end{theorem}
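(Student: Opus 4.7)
The plan is to establish the two inequalities $C_{FB}(\mathcal{X}^*,\eps)\le(1-\eps)\log K$ and $C(\mathcal{X}^*,\eps)\ge(1-\eps)\log K$; together with the trivial bound $C(\mathcal{X}^*,\eps)\le C_{FB}(\mathcal{X}^*,\eps)$ they force the equalities in the theorem.

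For the lower bound, which requires no feedback, I would take $\{X_n\}$ to be i.i.d. and uniform on $\{1,\ldots,K\}$. Since then $H(X_0\mid X_D)=\log K$ for every $D\subseteq[-n,-1]$, Lemma~\ref{nformula} together with the identity $H(Y_0\mid X_{-n}^{0},Y_{-n}^{-1})=H(E_0\mid E_{-n}^{-1})$ verified inside the proof of Theorem~\ref{entropyformula} yields
\[
I(X;Y)=\lim_{n\to\infty}\sum_{D\subseteq[-n,-1]}\log K\cdot P(E_0=1,E_{D}=1,E_{D^c}=0)=(1-\eps)\log K.
\]

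For the upper bound I would work from the directed-information characterization $C_{FB}(\mathcal{X}^*,\eps)=\lim_{n\to\infty}\frac{1}{n}\sup\sum_{i=1}^{n}I(X^i;Y_i\mid Y^{i-1})$, where the supremum is taken over causal input distributions $\{p(x_i\mid x^{i-1},y^{i-1})\}$. The key structural observation is that because $X_i$ never takes the value $0$, one has $Y_i=0\iff E_i=0$, so $E^{j}$ is a deterministic function of $Y^{j}$ for every $j$. Inductively this lets the encoder reconstruct $Y^{i-1}$ from $(W,E^{i-1})$, where $W$ denotes the message, so $X^i$ itself is a function of $(W,E^{i-1})$. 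Since $W$ is independent of the entire erasure process $\{E_n\}$, this yields the two conditional independence statements $E_i\perp X^i\mid E^{i-1}$ and $E_i\perp Y^{i-1}\mid E^{i-1}$, and hence
\[
H(Y_i\mid X^i,Y^{i-1})=H(E_i\mid E^{i-1})\quad\text{and}\quad H(E_i\mid Y^{i-1})=H(E_i\mid E^{i-1}).
\]

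Using $H(E_i\mid Y_i,Y^{i-1})=0$ to expand $H(Y_i\mid Y^{i-1})=H(E_i\mid Y^{i-1})+H(Y_i\mid E_i,Y^{i-1})$ and then subtracting gives the clean identity $I(X^i;Y_i\mid Y^{i-1})=H(Y_i\mid E_i,Y^{i-1})$. On the event $\{E_i=0\}$ the output is identically $0$, while on $\{E_i=1\}$ the conditional entropy $H(Y_i\mid E_i=1,Y^{i-1})=H(X_i\mid E_i=1,Y^{i-1})$ is at most $\log K$; averaging bounds each summand by $P(E_i=1)\log K=(1-\eps)\log K$, and summing over $i$ and dividing by $n$ closes the argument. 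The main obstacle is this upper-bound half: the memory in $\{E_n\}$ and the causal loop induced by feedback must be tracked simultaneously, and everything collapses precisely because the input alphabet excludes the erasure symbol, so $Y_i$ losslessly encodes the current $E_i$ and leaves no residual entropy that feedback could exploit.
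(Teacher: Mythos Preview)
Your argument is correct. Both halves rest on the same structural fact the paper also exploits---that $Y_i=0$ iff $E_i=0$, so the erasure pattern $E^{i-1}$ is a deterministic function of $Y^{i-1}$ and, with feedback, $X^i$ becomes a function of $(W,E^{i-1})$---but the execution differs in a genuine way.

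For the upper bound, the paper works at the block level: it uses the message-based characterization $C_{FB}=\lim_n\frac{1}{n}\sup I(W;Y_1^n)$, shows via the chain rule that $H(Y_1^n\mid W)=H(E_1^n)$, and then expands $H(Y_1^n)$ by writing $p(y_1^n)$ explicitly as a product of an erasure-pattern probability and a residual pmf $q(\cdot)$ on $K^{|\mathcal{I}(y_1^n)|}$ points, finally bounding the entropy of $q$ by $|\mathcal{I}(y_1^n)|\log K$ and summing to get $\EE[E_1+\cdots+E_n]\log K$. You instead use the per-symbol directed-information form $\sum_i I(X^i;Y_i\mid Y^{i-1})$ and derive the clean identity $I(X^i;Y_i\mid Y^{i-1})=H(Y_i\mid E_i,Y^{i-1})$, bounding each term individually by $(1-\eps)\log K$. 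Your route is shorter and avoids the explicit combinatorial decomposition of $p(y_1^n)$; the paper's route, on the other hand, makes the slack in the bound visible as the entropy deficit of the induced distribution $q$ from uniform. For the lower bound, the paper reuses its block computation of $H(Y_1^n)$ (noting that the inequality becomes equality for i.i.d.\ uniform input), whereas you invoke Lemma~\ref{nformula} directly; both are equally short.
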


On the other hand, we will show in the following that feedback may increase the capacity when the input constraint in the erasure channel is non-trivial. As elaborated below, this is achieved by comparing the asymptotics of the feedback capacity and the non-feedback capacity for a special input-constrained erasure channel.

In~\cite{oron2015}, Sabag {\em et al.} computed an explicit formula of feedback capacity for BEC with $(1,\infty)$-RLL input constraint $\mathcal{S}_0$.
\begin{theorem}\label{fbc}~\cite{oron2015}
The feedback capacity of the $(1,\infty)$-RLL input-constrained erasure channel is
$$
C_{FB}(\mathcal{S}_0, \eps)=\max_{0\le p\le\frac{1}{2}}\frac{H(p)}{p+\frac{1}{1-\eps}},
$$
where the unique maximizer $p(\eps)$ satisfies $p=(1-p)^{2-\eps}.$
\end{theorem}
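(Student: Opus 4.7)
The plan is to cast $C_{FB}(\mathcal{S}_0,\eps)$ as the optimal value of an infinite-horizon average-reward Markov decision process (MDP), solve the associated Bellman equation along a one-parameter family of stationary policies, and then carry out a scalar optimization over that parameter.

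I would begin from the standard directed-information characterization
$$
C_{FB}(\mathcal{S}_0,\eps)=\lim_{N\to\infty}\frac{1}{N}\sup\, I(X^N\to Y^N),
$$
where the supremum is over causally conditioned input laws $p(x^N\|y^{N-1})$ whose support lies in $\mathcal{S}_0$. Since the $(1,\infty)$-RLL constraint has topological order $1$, the input-constraint state $S_n\triangleq X_{n-1}\in\{1,2\}$ carries all information needed to determine admissible next symbols: from $S_n=1$ either symbol is allowed, while $S_n=2$ forces $X_n=1$. The encoder always knows $S_n$; because of feedback it also knows the decoder's view of the process.

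The next step is to introduce a Q-graph summarizing sufficient decoder-side statistics. A natural choice is $Q_n=$ ``last non-erased output symbol observed by the decoder,'' so $Q_n\in\{1,2\}$. Under any stationary feedback policy, $(S_n,Q_n)$ is a Markov chain, and invoking the Q-graph directed-information decomposition of Sabag--Permuter--Pfister, the feedback-capacity problem reduces to an average-reward MDP on this finite state space, with admissible actions being the input distributions respecting $\mathcal{S}_0$ and per-step reward $I(X_n;Y_n|Q_n)$. I would then conjecture the following stationary policy: in state $(S_n,Q_n)=(1,1)$ the encoder sends $2$ with probability $p$ and $1$ with probability $1-p$; in every other state the encoder sends $1$ deterministically (because either $S_n=2$ forces this, or the encoder is ``waiting'' for the previously attempted $2$ to be confirmed after an erasure). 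Hence the policy is parameterized by a single scalar $p\in[0,\tfrac{1}{2}]$.

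Under such a policy, mark as regenerative epochs the time instants at which $Q_n$ toggles from $2$ back to $1$. Applying the renewal-reward theorem, the information delivered per cycle equals $H(p)$ (one binary decision), while the expected number of channel uses per cycle is $p+\tfrac{1}{1-\eps}$: with probability $1-p$ the cycle consists of transmitting a single $1$ which takes $\tfrac{1}{1-\eps}$ uses on average under i.i.d.\ erasures, and with probability $p$ an additional forced $1$ must follow the confirmed $2$, contributing the extra $p$. This yields the candidate rate $H(p)/(p+1/(1-\eps))$, and maximizing over $p$ by differentiation gives the first-order condition, which after cancellation reduces to the fixed-point equation $p=(1-p)^{2-\eps}$; strict concavity of $H$ and monotonicity of the denominator force uniqueness of the maximizer in $(0,\tfrac{1}{2}]$.

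The main obstacle is the converse: showing that no feedback scheme can exceed this rate. Here I would appeal to the Q-graph upper bound for feedback capacity of finite-state input-constrained channels and verify that the candidate value function $V(s,q)$ associated with the policy above satisfies the corresponding Bellman equation exactly, using the concavity of $H(\cdot)$ and exploiting the essentially two-state structure of the MDP (because the only nontrivial decision point is $(1,1)$). Establishing tightness of this particular two-state Q-graph, as opposed to larger ones, is the crux: once the Bellman equation is verified it automatically certifies both the optimality of the conjectured scalar policy and the value $C_{FB}(\mathcal{S}_0,\eps)$ quoted in the theorem. Everything else reduces to renewal-theoretic bookkeeping and a one-variable calculus exercise.
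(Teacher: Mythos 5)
This theorem is not proved in the paper at all: the authors state it as Theorem~\ref{fbc} with an explicit citation to Sabag, Permuter and Kashyap~\cite{oron2015}, and then use it as a black box to derive the asymptotic comparison in Theorem~\ref{yonglong-feedback}. So there is no ``paper's own proof'' to compare against; what you have written is a reconstruction of the argument in the cited reference. Your overall plan --- cast the feedback-capacity problem as an average-reward dynamic program, guess a one-parameter stationary policy, obtain the candidate rate $H(p)/(p+\tfrac{1}{1-\eps})$ by a renewal/cycle decomposition, and certify optimality by verifying the Bellman equation --- is the right general shape and is indeed the spirit of~\cite{oron2015}. That much is a fair high-level summary.

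There are, however, concrete gaps in the details. First, the auxiliary state $Q_n=$ ``last non-erased output symbol'' is not a correct sufficient statistic for the decoder. If the decoder sees a $2$ and then a single erasure, the constraint ($22$ forbidden) lets it infer that the erased symbol was a $1$, so the decoder's effective state must revert immediately --- your $Q$-chain would not, and consequently the policy ``send $1$ deterministically whenever $Q_n=2$'' is wasting free slots. Second, and more seriously, the renewal accounting is asserted rather than derived, and naive bookkeeping does not give $p+\tfrac{1}{1-\eps}$. If the encoder sends a $2$ and it is erased, the constraint forces a $1$ before the $2$ can be reattempted; a straightforward ``geometric retransmission'' scheme therefore costs $\tfrac{2}{1-\eps}$ uses per successful $2$ and $\tfrac{1}{1-\eps}$ per successful $1$, i.e.\ $\tfrac{1+p}{1-\eps}$ per cycle, which is strictly larger than $p+\tfrac{1}{1-\eps}$ for $\eps>0$. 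The actual achieving scheme in~\cite{oron2015} is a posterior-matching-type policy defined on the decoder's belief state, and it exploits precisely the constraint-induced inference that your $Q$-graph misses; the saving of $p\cdot\tfrac{\eps}{1-\eps}$ per cycle needs a real argument. Third, the converse (``appeal to the $Q$-graph upper bound and verify the Bellman equation'') is a promissory note: the $Q$-graph upper-bound machinery you invoke postdates~\cite{oron2015}, and in any case verifying the Bellman equation against the correct belief-state MDP is the technical core of the cited paper, not a corollary. As written, the proposal is a plausible roadmap but not a proof; if you want to include it, it should be framed as a sketch of the argument in~\cite{oron2015} with the above caveats, or the theorem should simply remain cited as the paper does.
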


Clearly, the explicit formula in Theorem~\ref{fbc} readily gives the asymptotics of the feedback capacity.

To see this, note that $p(0)={1}/{\lambda^2}$ and $p(1)={1}/{2}$. Straightforward computations yield
$$
\frac{\mathrm{d}\log p(\eps)}{\mathrm{d}\eps}=-\frac{(1-p(\eps))\log p(\eps)}{(1-p(\eps)+p(\eps)(2-\eps))(2-\eps)}.
$$
Hence,
\begin{align*}
C_{FB}(\mathcal{S}_0, \eps)&=-\frac{H(p(\eps))}{p(\eps)+\frac{1}{1-\eps}}\\
&=-\frac{(1-\eps)\log p(\eps)}{{2-\eps}}\\
&=-\frac{1}{2}\log p(\eps)+\frac{1}{2}\log p(\eps)\sum_{k=1}^{\infty}\left(\frac{\eps}{2}\right)^{k}\\
&=-\frac{1}{2}\log p(0)-\frac{1}{2}\left.\frac{\mathrm{d}\log p(\eps)}{\mathrm{d}\eps}\right|_{\eps=0}\eps+\frac{\eps}{4}\log p(0)+O(\eps^2)\\
&=\log \lambda-\frac{\lambda^2}{\lambda^2+1}\log \lambda \cdot \eps+O(\eps^2).
\end{align*}
It then follows from straightforward computations that for the case when $\eps$ is close to $0$, $C(\mathcal{S}_0, \eps)<C_{FB}(\mathcal{S}_0, \eps)$. So, we have proven the following theorem:
\begin{theorem} \label{yonglong-feedback}
For a BEC($\eps$) with the $(1, \infty)$-RLL input constraint, feedback increases the channel capacity when $\eps$ is small enough.
\end{theorem}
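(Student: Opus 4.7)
The plan is to directly compare the first-order Taylor expansions of $C(\mathcal{S}_0, \eps)$ and $C_{FB}(\mathcal{S}_0, \eps)$ around $\eps=0$. Both expansions share the same constant term $\log\lambda$ (the noiseless capacity of $\mathcal{S}_0$), so the question of which is larger near $\eps=0$ reduces to which has the less negative coefficient of $\eps$.

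First I would invoke Corollary~\ref{asmpt}, which already gives the partial asymptotics
$$
C(\mathcal{S}_0, \eps) = \log\lambda - \frac{2\log 2}{1+\lambda^2}\,\eps + O(\eps^2),
$$
so no additional work is needed for the non-feedback side. For the feedback side I would take the explicit formula $C_{FB}(\mathcal{S}_0, \eps) = \max_{p} H(p)/(p + (1-\eps)^{-1})$ from Theorem~\ref{fbc} and extract its first-order expansion, precisely as carried out in the paragraph preceding the theorem. The key ingredient there is that the optimizing $p(\eps)$ is characterized implicitly by $p = (1-p)^{2-\eps}$; applying implicit differentiation gives $\mathrm{d}\log p(\eps)/\mathrm{d}\eps$ at $\eps=0$, and together with $p(0) = 1/\lambda^2$ this yields
$$
C_{FB}(\mathcal{S}_0, \eps) = \log\lambda - \frac{\lambda^2}{\lambda^2+1}\log\lambda\cdot\eps + O(\eps^2).
$$

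With both expansions in hand, the theorem follows from the strict inequality
$$
\frac{\lambda^2 \log\lambda}{\lambda^2+1} < \frac{2\log 2}{1+\lambda^2},
$$
i.e.\ $\lambda^2 \log\lambda < 2\log 2$, which I would verify directly using $\lambda = (1+\sqrt{5})/2$ so that $\lambda^2 = \lambda + 1$ and numerical estimates (or equivalently the identity $\log\lambda = \tfrac{1}{2}\log(\lambda+1)$) reduce it to an elementary comparison. Subtracting the two expansions then gives $C_{FB}(\mathcal{S}_0, \eps) - C(\mathcal{S}_0, \eps) = c\,\eps + O(\eps^2)$ with $c>0$, so there is an $\eps_0>0$ such that $C(\mathcal{S}_0, \eps) < C_{FB}(\mathcal{S}_0, \eps)$ for all $\eps \in (0, \eps_0)$.

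The only mild obstacle is that Corollary~\ref{asmpt} provides a first-order asymptotic formula with an $O(\eps^2)$ residual rather than a full series, so the comparison is genuinely valid only for sufficiently small $\eps$; the asymptotic statement is however exactly what is asserted in the theorem. Everything else is a routine calculation, and the entire proof is essentially bookkeeping around the two Taylor expansions already assembled in the preceding subsections.
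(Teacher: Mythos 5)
Your proposal is correct and follows essentially the same route as the paper: both compare the first-order Taylor coefficients of $C(\mathcal{S}_0,\eps)$ from Corollary~\ref{asmpt} and of $C_{FB}(\mathcal{S}_0,\eps)$ obtained by expanding the explicit formula from Theorem~\ref{fbc} (via implicit differentiation of $p=(1-p)^{2-\eps}$), and reduce the conclusion to the elementary inequality $\lambda^{2}\log\lambda<2\log 2$. The paper leaves this last numerical check as ``straightforward computations,'' whereas you spell it out; the substance of the argument is identical.
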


\begin{rem}
An independent work in~\cite{th16} also found that feedback does increase the capacity of a BEC($\eps$) with the same input constraint $\mathcal{S}_0$, by comparing a tighter bound of non-feedback capacity $C(\mathcal{S}_0, \eps)$, obtained via a dual capacity approach, with the feedback capacity $C_{FB}(\mathcal{S}_0, \eps)$.
\end{rem}

\begin{rem}
Recently, Sabag {\em et al.}~\cite{oron2015bsc} also computed an explicit asymptotic formula for the feedback capacity of a BSC($\eps$) with the input supported on the $(1,\infty)$-RLL constraint. By comparing the asymptotics of the feedback capacity with the that of non-feedback capacity~\cite{asymptotics-binary}, they showed that feedback does increase the channel capacity in the high SNR regime.

It is well known that for any memoryless channel without any input constraint, feedback does not increase the channel capacity. Theorem~\ref{fbnot} states that when there is no input constraint, the feedback does not increase the capacity of the erasure channel even with the presence of the channel memory. Theorem~\ref{yonglong-feedback} says that feedback may increase the capacity of input-constrained erasure channels even if there is no channel memory. These two theorems, together with the results in~\cite{oron2015bsc, th16}, suggest the intricacy of the interplay between feedback, memory and input constraints.
\end{rem}

\section*{Appendices} \appendix

\section{Proof of Theorem~\ref{fbnot}} \label{FB-NFB}

We first prove that
\begin{equation} \label{C-NFB}
C(\mathcal{X}^*, \eps)=(1-\eps)\log K.
\end{equation}

A similar argument using the independence of $\{X_i\}$ and $\{E_i\}$ as in the proof of~(\ref{pformula}) yields that
$$
p(y_1^n)=P(E_{\mathcal{I}(y_1^n)}=1, E_{\bar{\mathcal{I}}(y_1^n)}=0)P(X_{\mathcal{I}(y_1^n)}=y_{\mathcal{I}(y_1^n)}).
$$
It then follows that
\begin{align}
H(Y_1^n)&=-\sum_{y_1^n}p(y_1^n)\log p(y_1^n)\nonumber\\
&=-\sum_{y_1^n}p(y_1^n)\log P(E_{\mathcal{I}(y_1^n)}=1, E_{\bar{\mathcal{I}}(y_1^n)}=0)-\sum_{y_1^n}P(y_1^n)
\log P(X_{\mathcal{I}(y_1^n)}=y_{\mathcal{I}(y_1^n)})\nonumber\\
&=-\sum_{D \subseteq [1, n]}\sum_{y_1^n:\mathcal{I}(y_1^n)=D}P(E_{D}=1, E_{D^C}=0)P(X_{D}=y_{D})\log P(E_{D}=1, E_{D^c}=0)\nonumber\\
&{}\hspace{4.5mm}-\sum_{D \subseteq [1, n]}\sum_{y_1^n:\mathcal{I}(y_1^n)=D}P(E_{D}=1, E_{D^c}=0)P(X_{D}=y_{D})
\log P(X_{D}=y_{D})\nonumber\\
&=-\sum_{D \subseteq [1, n]}P(E_{D}=1, E_{D^c}=0)\log P(E_{D}=1, E_{D^c}=0)+\sum_{D\subseteq [1, n]}P(E_{D}=1, E_{D^c}=0)H(X_D)\nonumber\\
&=\sum_{D \subseteq [1, n]}P(E_{D}=1, E_{D^c}=0)H(X_D)+H(E_1^n)\nonumber\\
&\le H(E_1^n)+\sum_{D \subseteq [1, n]}P(E_{D}=1, E_{D^c}=0)|D| \log K\nonumber\\
&=H(E_1^n)+\EE[E_1+\cdots+E_n] \log K \label{paomo},
\end{align}
where the only inequality becomes equality if $\{X_n\}$ is i.i.d. with the uniform distribution. It then further follows that
\begin{align*}
C(\mathcal{X}^*, \eps)& =\lim_{n\to \infty}\frac{1}{n}\sup_{p(x_1^n)}I(X_1^n;Y_{1}^n)\\
&=\lim_{n\to \infty}\frac{1}{n}\sup_{p(x_1^n)} (H(Y_1^n)-H(Y_1^n|X_1^n))\\
&=\lim_{n\to \infty}\frac{1}{n}\sup_{p(x_1^n)}(H(Y_1^n)-H(E_1^n))\\
&\stackrel{(a)}{\leq} \lim_{n\to \infty}\frac{1}{n} \EE[E_1+\cdots+E_n] \log K\\
&\stackrel{(b)}{=} P(E_1=1) \log K\\
&= (1-\eps) \log K ,
\end{align*}
where $(a)$ follows from (\ref{paomo}) and $(b)$ follows from the ergodicity of $\{E_n\}$. The desired (\ref{C-NFB}) then follows from the fact that the only inequality $(a)$ becomes equality if $\{X_n\}$ is i.i.d. with the uniform distribution.

We next prove that
$$
C_{FB}(\mathcal{X}^*, \eps) \leq (1-\eps)\log K,
$$
which, together with (\ref{C-NFB}), immediately implies the theorem.

Let $W$, independent of $\{E_i\}$, be the message to be sent and $X_{i}(W,Y_{1}^{i-1})$ denote the encoding function. As shown in~\cite{tatikondafb},
$$C_{FB}(\mathcal{X}^*, \eps)=\lim_{n\to \infty}\frac{1}{n}\sup_{\{p(X_{i}=\cdot|X_1^{i-1},Y_{1}^{i-1}):i=1,\cdots,n\}}I(W;Y_1^{n}).$$
Using the chain rule for entropy, we have
\begin{align}
H(Y_{1}^{n}|W)&=\sum_{i=1}^{n}H(Y_i|W, Y_{1}^{i-1})\nonumber\\
&\stackrel{(a)}{=}\sum_{i=1}^{n}H(E_iX_i|W, Y_{1}^{i-1}, X_1^{i},E_1^{i-1}) \nonumber\\
&=\sum_{i=1}^{n}H(E_i|W, Y_{1}^{i-1}, X_1^{i}, E_1^{i-1}) \nonumber\\
&\stackrel{(b)}{=}\sum_{i=1}^{n}H(E_i|E_1^{i-1}) \nonumber \\
&=H(E_1^n),\label{HZW}
\end{align}
where (a) follows from the fact that $X_i$ is a function of $W$ and $Y_{1}^{i-1}$ and $E_i=0$ if and only if $Y_i=0$, (b) follows from the independence of $W$ and $\{E_i\}$.

Note that for $y_i\not=0$,
\begin{align}\label{not0p}
p(y_i|w,y_{1}^{i-1})&=P(X_i(w,y_1^{i-1})=y_i, E_i=1|w,y_{1}^{i-1})\nonumber\\
&=P(X_i(w,y_1^{i-1})=y_i|w,y_{1}^{i-1})P(E_i=1|X_i(w,y_1^{i-1})=y_i, w,y_{1}^{i-1})\nonumber\\
&=P(X_i(w,y_1^{i-1})=y_i|w,y_{1}^{i-1})P(E_i=1|E_{\mathcal{I}(y_1^{i-1})}=1, E_{\bar{\mathcal{I}}(y_1^{i-1})}=0).
\end{align}
And for $y_i=0$,
\begin{align}\label{0p}
p(y_i|w,y_{1}^{i-1})&=P(E_{i}=0|E_{\mathcal{I}(y_1^{i-1})}=1, E_{\bar{\mathcal{I}}(y_1^{i-1})}=0,w, y_{1}^{i-1})\nonumber\\
&\stackrel{(a)}{=}P(E_{i}=0|E_{\mathcal{I}(y_1^{i-1})}=1, E_{\bar{\mathcal{I}}(y_1^{i-1})}=0),
\end{align}
where $(a)$ follows from the independence of $W$ and $\{E_i\}$. It then follows that
\begin{align*}
p(y_1^{n})&=\sum_{w}p(w)p(y_1^{n}|w)\\
&=\sum_{w}p(w)\prod_{i=1}^{n}p(y_i|w,y_{1}^{i-1})\\
&\stackrel{(a)}{=}\sum_{w}p(w)\prod_{i\in \mathcal{I}(y_1^{n})}p(y_i|w,y_{1}^{i-1})\prod_{i\in \bar{\mathcal{I}}(y_{1}^n)}p(y_i|w,y_{1}^{i-1})\\
&\stackrel{(b)}{=}\sum_{w}p(w)\prod_{i\in \mathcal{I}(y_1^{n})}p(y_i|w,y_{1}^{i-1})\prod_{i\in \bar{\mathcal{I}}(y_{1}^n)}P(E_i=0|E_{\mathcal{I}(y_1^{i-1})}=1, E_{\bar{\mathcal{I}}(y_1^{i-1})}=0)\\
&=\left\{\sum_{w}p(w)\prod_{i\in \mathcal{I}(y_1^{n})}P(X_i(w,y_1^{i-1})=y_i|w,y_{1}^{i-1})\right\}P(E_{\mathcal{I}(y_1^{n})}=1, E_{\bar{\mathcal{I}}(y_1^{n})}=0),
\end{align*}
where $(a)$ follows from~(\ref{not0p}) and $(b)$ follows from~(\ref{0p}). Since for any $D \subseteq [1, n]$,  $$
\sum_{y_1^n:\mathcal{I}(y_1^n)=D}p(y_1^{n})=P(E_{D}=1, E_{D^c}=0),
$$
which implies that
$$
\left\{q(\tilde{y}_1^n))\triangleq\sum_{w}p(w)\prod_{i\in \mathcal{I}(\tilde{y}_1^{n})}P(X_i(w,\tilde{y}_1^{i-1})=\tilde{y}_i|w,\tilde{y}_{1}^{i-1}):\mathcal{I}(\tilde{y}_1^{n})=\mathcal{I}(y_1^{n})\right\}
$$
is an $M^{|\mathcal{I}({y}_1^{n})|}$-dimensional probability mass function. Therefore, through a similar argument as before, we have
\begin{align}
H(Y_1^n)&=-\sum_{y_1^n}p(y_1^n)\log p(y_1^n)\nonumber\\
&=-\sum_{y_1^n}p(y_1^n)\log P(E_{\mathcal{I}(y_1^n)}=1, E_{\bar{\mathcal{I}}(y_1^n)}=0)-\sum_{y_1^n}P(E_{\mathcal{I}(y_1^n)}=1, E_{\bar{\mathcal{I}}(y_1^n)}=0)q(y_1^n)\log q(y_1^n)\nonumber\\
&=H(E_1^n)-\sum_{D \subseteq [1, n]}\sum_{y_1^n:\mathcal{I}(y_1^n)=D}P(E_{D}=1, E_{D^c}=0)q(y_1^n)
\log q(y_1^n)\nonumber\\
&\le H(E_1^n)+\sum_{D \subseteq [1, n]}P(E_{D}=1, E_{D^c}=0)|D| \log K\nonumber\\
&=H(E_1^n)+\EE[E_1+\cdots+E_n] \log K, \label{HZ}
\end{align}
where the inequality follows from the fact that $q(y_1^n)$ is an $M^{|D|}$-dimensional probability mass function.

Combining~(\ref{HZW}) and~(\ref{HZ}), we have
\begin{align*}
C_{FB}(\mathcal{X}^*, \eps) &=\lim_{n\to \infty}\frac{1}{n}\sup_{\{p(X_{i}=\cdot|X_1^{i-1},y_{1}^{i-1}):i=1,\cdots,n\}}I(W;Y_1^{n})\\
&\le \lim_{n\to \infty}\frac{1}{n} \EE[E_1+\cdots+E_n] \log K\\
&= P(E_1=1)\log K\\
&= (1-\eps) \log K,
\end{align*}
as desired.

\bigskip

{\bf Acknowledgement.} We would like to thank Navin Kashyap, Haim Permuter, Oron Sabag and Wenyi Zhang for insightful discussions and suggestions and for pointing out relevant references that result in great improvements in many aspects of this work.

\end{document}